\journal{Journal of Computer and System Sciences}
\newtheorem{theorem}{Theorem}
\newtheorem{lemma}[theorem]{Lemma}
\newtheorem{proposition}[theorem]{Proposition}
\newtheorem{corollary}[theorem]{Corollary}
\newtheorem{definition}[theorem]{Definition}
\newtheorem{claim}{Claim}
\newtheorem{example}{Example}
\newproof{proof}{Proof}
\newcommand*{\DEBUG}{}%
\newcommand{\fixme}[1]{{\textcolor{red}{\bf{\textsf{FIXME: #1}}}}}
\newcommand{\bug}[1]{{\textcolor{blue}{\bf{\textsf{BUG: #1}}}}}
\newcommand{\idea}[1]{{\textcolor{blue}{\bf{\textsf{IDEA: #1}}}}}
\newcommand{\TODO}[1]{{\textcolor{red}{\bf{\textsf{
TODO: #1
}}}}}
\newcommand{\fixme}[1]{}
\newcommand{\bug}[1]{}
\newcommand{\TODO}[1]{}
\newcommand{\idea}[1]{}
\newclass{\DCM}{DCM}
\newclass{\RCM}{RCM}
\newclass{\DCA}{DCA}
\newclass{\DPDA}{DPDA}
\newclass{\DCMNE}{DCM_{NE}}
\newclass{\TwoDCM}{2DCM}
\newclass{\NCM}{NCM}
\newclass{\DPCM}{DPCM}
\newclass{\NPCM}{NPCM}
\newclass{\TRE}{TRE}
\newcommand\abs[1]{\left|#1\right|}
\newcommand\set[1]{\left\{#1\right\}}
\newcommand{\sst}{\ensuremath{\mid}}
\newcommand\union{\cup}
\newcommand\natnum{\mathbb{N}}
\newcommand\natzero{\mathbb{N}_0}
\DeclareMathOperator{\pref}{pref}
\DeclareMathOperator{\suff}{suff}
\DeclareMathOperator{\infx}{inf}
\DeclareMathOperator{\outf}{outf}
\DeclareMathOperator{\emb}{emb}
\begin{document}

\begin{frontmatter}




\title{Insertion Operations on Deterministic Reversal-Bounded Counter Machines\tnoteref{t1}}

\tnotetext[t1]{\textcopyright 2018. This manuscript version is made available under the CC-BY-NC-ND 4.0 license \url{http://creativecommons.org/licenses/by-nc-nd/4.0/}}


\author[label1]{Joey Eremondi\fnref{fn3}}
\address[label1]{Department of Computer Science\\ University of British Columbia, Vancouver, BC V6T 1Z4, Canada}
\ead[label1]{jeremond@cs.ubc.ca}

\author[label2]{Oscar H. Ibarra\fnref{fn2}}
\address[label2]{Department of Computer Science\\ University of California, Santa Barbara, CA 93106, USA}
\ead[label2]{ibarra@cs.ucsb.edu}
\fntext[fn2]{Supported, in part, by
NSF Grant CCF-1117708 (Oscar H. Ibarra).}

\author[label3]{Ian McQuillan\fnref{fn3}}
\address[label3]{Department of Computer Science, University of Saskatchewan\\
Saskatoon, SK S7N 5A9, Canada}
\ead[label3]{mcquillan@cs.usask.ca}
\fntext[fn3]{Supported, in part, by Natural Sciences and Engineering Research Council of Canada Grant 2016-06172 (Ian McQuillan).}

\begin{abstract}
Several insertion operations are studied applied to languages accepted by one-way and two-way deterministic reversal-bounded multicounter machines. These operations are defined by the ideals obtained from relations such as the prefix, infix, suffix, and outfix relations, as well as operations defined from inverses of a type of deterministic transducer with reversal-bounded counters attached. The question of whether the resulting languages can always be accepted by deterministic machines with the same number (or larger number)
of input-turns (resp., counters, counter-reversals, etc.) is investigated.
\end{abstract}

\begin{keyword}
Automata and Logic \sep Counter Machines \sep Insertion Operations \sep Reversal-Bounds \sep Determinism \sep Finite Automata


\end{keyword}

\end{frontmatter}





\section{Introduction}

One-way deterministic multicounter machines are deterministic finite automata augmented by a fixed number of counters, which can each be independently increased, decreased or tested for zero. If there is a bound on the number of switches each counter makes between increasing and decreasing, then the machine is reversal-bounded
\cite{Baker1974,Ibarra1978}. The family of languages accepted by one-way deterministic reversal-bounded multicounter machines is denoted by $\DCM$ (and $\DCM(k,l)$ when there are at most $k$ counters with at most an $l$-reversal-bound), and the nondeterministic variant is denoted by $\NCM$. 

Reversal-bounded counter machines (both deterministic and nondeterministic)
have been extensively studied.  Many generalizations have been investigated,
and they have found applications in areas such as verification of infinite-state systems \cite{verification,stringTransducers,modelChecking,verificationDiophantine}, membrane computing systems \cite{counterMembrane}, biocomputing \cite{bioOp}, Diophantine equations \cite{verificationDiophantine}, and others.
$\DCM$ in particular is an interesting family as it is more general than the family of regular languages, but still has  decidable  emptiness, infiniteness, equivalence, inclusion, universe, and disjointness problems
\cite{Ibarra1978}. Moreover, these problems remain decidable if the machines operate with two-way input that is
finite-crossing in the sense that there is a fixed $r$ such that
the number of times the boundary between any two adjacent input
cells is crossed is at most $r$ \cite{Gurari1981220}. In addition,
for fixed $k,l$, the emptiness, membership, containment, and equivalence problems for $\DCM(k,l)$ can be
tested in polynomial time \cite{Gurari1981220}. 
Hence, $\DCM$ has many nice
decidability and complexity theoretic properties. 
We know of no other family more general than the regular languages
that enjoy these properties.
Despite this,
little is known regarding the closure properties of this family, which
is important for constructing other languages that remain in this family.

More recently, the $\DCM$ model has gained a resurgence of theoretical interest. It was shown that all commutative semilinear languages
are in $\DCM$, and in fact, the subfamily of $\DCM$ languages accepted
by machines
that cannot subtract from any counter until hitting the right end-marker was shown to be equal to the smallest family closed under inverse
deterministic finite transductions, commutative closure, and right quotient with regular languages \cite{eDCM}. In \cite{shuffleInformationAndComputation}, it
was shown that there is a polynomial time algorithm to decide,
for fixed $k,l$ whether the shuffle of two
$\NCM(k,l)$ machines is contained in a $\DCM(k,l)$ machine.
In addition, $\DCM$ was studied in \cite{Castiglione2016} as part
of an interesting conjecture involving holonomic functions.
The authors define a family $\RCM$ that is obtained from the regular languages
via so-called linear constraints on the number of occurrences of symbols,
and homomorphisms. It is demonstrated that all $\RCM$ languages
have generating functions which are all holonomic functions. The
class of holonomic functions in one variable is an
extension of the algebraic functions which contains all those
functions satisfying a linear differential equation with
polynomial coefficients \cite{Castiglione2016}. They conjectured
that $\DCM$ is contained in $\RCM$, implying that all $\DCM$
languages have holonomic generating functions. Although this conjecture
has yet to be established, it is shown to be true for a subfamily of
$\DCM$.
The study of closure properties on $\DCM$ can potentially help in this regard towards establishing the conjecture.
Deletion operations applied to $\DCM$ have also been recently investigated \cite{deletionDCM} using word operations such as prefix, suffix, infix, and left and right quotients. It was found that $\DCM$ is closed under right quotient with many general families defined even by nondeterministic machines such as the context-free languages, and it is shown that the left quotient of a $\DCM(1,1)$ language with general families such as the context-free languages always gives $\DCM$ languages. However, even the suffix closure of languages in $\DCM(2,1)$ or $\DCM(1,3)$ gives languages which are not in $\DCM$.

Generally, various schema for insertions and deletions have been studied in automata theory, from simple concatenation \cite{HU}, to more complex insertion operations \cite{parInsDel}, and they have found applications in the area of natural computing for modelling biological processes \cite{inter,TranspositionsNatCo}. 

In this paper, we study various insertion operations on deterministic reversal-bounded multicounter languages. 
The prefix, suffix, infix, and outfix deletion operations can
also be used to define insertion operations.
As an example, the set of all infixes of a language $L$,
$\infx(L) = \{w \mid xwy \in L, x,y \in \Sigma^*\}$, and then 
the inverse of this operation, $\infx^{-1}(L) = \Sigma^* L \Sigma^*$, is the set of all words having a word in $L$
as an infix. This is the same as what is often called the {\em two-sided ideal}, or the {\em infix ideal} \cite{JKT}. For the suffix operation, $\suff(L) = \{ w \mid xw \in L, x \in \Sigma^*\}$, and $\suff^{-1}(L) = \Sigma^* L$, with the latter being called the {\em left ideal}, or the {\em suffix ideal}. For prefix, $\pref(L) = \{ w \mid wy \in L, y \in \Sigma^*\}$, and $\pref^{-1}(L) = L \Sigma^*$, the {\em prefix ideal}, or the {\em right ideal}.
Thus, the inverse of each operation defines a natural and simple insertion operation.

We will examine the insertion operations defined by the inverse of the prefix, suffix, infix, outfix, and embedding operations, as well
as the concatenation of languages from other families.
It is easy to see that all language families 
closed under homomorphism, inverse homomorphism and
intersection with regular languages (such as the 
nondeterministic reversal-bounded
multicounter languages, or the context-free languages) are closed
under all these insertion operations. However, this is a more
complex question for families accepted by deterministic machines such as $\DCM$.
In this case, if we start with a language that can be accepted with a parameterized number of counters, input tape turns, and reversals on the counters, is the result of the various insertion operations always accepted with the same type of machine? These results are summarized for all such insertion operations in column 2 of Table \ref{tab:summary}. And if they are not closed, can they always be accepted by increasing either the number of counters, or reversals on the counters (presented in column 3 of Table \ref{tab:summary}),
or turns on the input tape (listed in Section \ref{sec:summary})?
Results in this paper form a complete characterization in this regard. In particular, it is rather surprising that even if we have languages accepted by deterministic 1-reversal-bounded machines with either one-way input and $2$ counters, or $1$ counter and $1$ turn on the input, then concatenating $\Sigma^*$ to the right can result in languages that can neither be accepted by $\DCM$ machines (any number of reversal-bounded counters), nor by two-way deterministic reversal-bounded one counter machines ($2\DCM(1)$, which have no bound on input turns). This is in contrast to deterministic pushdown languages which are closed under right concatenation with regular languages \cite{harrison1978}. In addition, concatenating $\Sigma^*$ to the left of a $\DCM(1,1)$ language can create languages that are neither in $\DCM$ nor $2\DCM(1)$.

\begin{table}[t]
\centering
\begin{tabular}{|l|ll|ll|}
\hline
\textbf{Operation}
& is $ Op(L) \in \DCM(k,l)$? &  & is $Op(L) \in \DCM$? &
 \\\hline$\pref^{-1}(L)$  &Yes if $k = 1, l \geq 1$ & Cor \ref{dcminverseprefix} & Yes if $k = 1, l \geq 1$ & Cor \ref{dcminverseprefix}
  \\&No if $k\geq 2, l \geq 1$ & Thm \ref{nonclosure2}  & Yes if $L \in \DCMNE$ & Thm \ref{DCMNEwithREG}
  \\&                            & & No otherwise if $k \geq 2, l \geq 1$ & Thm \ref{nonclosure2}
 \\\hline$\suff^{-1}(L)$  &No if $k,l\geq 1$ & Thm \ref{inversesuffix} &  No if $k,l \geq 1$ & Thm \ref{inversesuffix}
 \\\hline$\infx^{-1}(L)$  &No if $k,l\geq 1$ & Thm \ref{prop:inverseNotInDCM} &  No if $k,l \geq 1$ & Thm \ref{prop:inverseNotInDCM}
 \\\hline$\outf^{-1}(L)$   &No if $k,l\geq 1$ & Thm \ref{inverseoutfix} & No if $k,l \geq 1$ & Thm \ref{inverseoutfix}

 \\\hline$\emb^{-1}(m,L)$   &No if $k,l,m\geq 1$ & Cor \ref{inverseembedding} & No if $k,l,m \geq 1$ & Cor \ref{inverseembedding}

 \\\hline$LR$  &Yes if $k = 1, l \geq 1$ & Cor \ref{oneCounterConcat} & Yes if $k = 1, l \geq 1$  & Cor \ref{oneCounterConcat}
 \\ &  Yes if $L \in \DCMNE$ & Thm \ref{DCMNEwithREG} & Yes if $L \in \DCMNE$ & Thm \ref{DCMNEwithREG}
    \\   &No otherwise if $k \geq 2, l \geq 1$ & Thm \ref{nonclosure2} &  No otherwise if $k \geq 2, l \geq 1$ & Thm \ref{nonclosure2}

 \\\hline$R L$  & Yes if $R$ prefix-free & Cor \ref{prefixfreeregular} &  Yes if $R$ prefix-free & Cor \ref{prefixfreeregular}
   \\   &No otherwise if $k,l \geq 1$ & Cor \ref{leftreg} & No otherwise if $k,l \geq 1$ & Cor \ref{leftreg}
 \\\hline$L_\DCM L$  &No if $k,l \geq 1$ & Cor \ref{leftdcmne} & No if $k,l \geq 1$ & Cor \ref{leftdcmne}

 \\\hline$L_\DCMNE L$  &No if $k,l \geq 1$ & Cor \ref{leftdcmne} & Yes if $L_\DCMNE$ prefix-free & Thm \ref{concatenationprefixfreeDCM}
 \\                                                             & & & No otherwise if $k,l \geq 1$ & Cor \ref{leftdcmne}


\\\hline
\end{tabular}
\caption{Summary of results for $\DCM$. Assume $R \in \REG$, $L_\DCM \in \DCM$, and $L_\DCMNE \in \DCMNE$.
Then, for all $L \in \DCM(k,l)$, the question 
in row 1 is presented
for each insertion operation in column 1. When applying the operation in the first column to any $L\in \DCM(k,l)$, is the result necessarily in $\DCM(k,l)$ (column 2), and in $\DCM$ (column 3)? This is parameterized in terms of $k$ and $l$, and the theorems showing each result is provided.}
\label{tab:summary}
\end{table}

As a consequence of the results in this paper, it is evident that the right input end-marker used for language acceptance for (one-way) $\DCM$ strictly increases the power for even one-way deterministic reversal-bounded multicounter languages when there are at least two counters. This is usually not the case for various classes of one-way machines, such
as for deterministic pushdown automata ($\DPDA$s). 
Indeed, language acceptance for $\DPDA$s is defined as being without a
right end-marker, and $\DPDA$s are closed under right quotient
with a single symbol \cite{GinsburgDPDAs}, meaning a right end-marker could be
removed without altering the languages accepted. In contrast, language
acceptance for $\DCM$s is defined using a right end-marker and
$\DCM$ is closed under right quotient with symbols (and even context-free
languages) \cite{deletionDCM}. But the end-marker is necessary for this right quotient
result. Moreover, if language acceptance for $\DCM$ is defined
without an end-marker (defined and studied in this paper), this family of languages is not closed under right quotient with a single symbol. This demonstrates
the importance of the right input end-marker.

Lastly, a type of finite transducer augmented by reversal-bounded counters
is studied, and it is shown that $\DCM$ is closed under these inverse
deterministic transductions.
The inverses of these transductions can be used for defining many
insertion operations under which $\DCM$ is closed.

Most non-closure results in this paper use techniques that
simultaneously shows languages are not in $\DCM$ and not
in $2\DCM(1)$.  The techniques do not rely on any pumping arguments.

\section{Preliminaries}

The set of non-negative integers is represented by $\natzero$, and positive integers by $\natnum$. For $c \in \natzero$,
let $\pi(c)$ be $0$ if $c=0$, and $1$ otherwise.

We use standard notations for formal languages,
referring the reader to \cite{harrison1978,HU}.
The empty word is denoted by $\lambda$.
We use $\Sigma$ and $\Gamma$ to represent finite alphabets,
with $\Sigma^*$ as the set of all words over $\Sigma$
and $\Sigma^+ = \Sigma^* \setminus \set{\lambda}$.
For a word $w \in \Sigma^*$, if $w = a_1 \cdots a_n$
where $a_i \in \Sigma$, $1\leq i \leq n$, the length of $w$ is denoted by $\abs{w}=n$,
and the reversal of $w$ is denoted by $w^R = a_n \cdots a_1$. The number of $a$'s, for $a\in \Sigma$, in $w$ is $|w|_a$. Given a language $L\subseteq \Sigma^*$,
the complement of $L$, $\Sigma^* \setminus L$ is denoted by $\overline{L}$.


\begin{definition}
\label{def:opGeneralize}
For a language $L \subseteq \Sigma^*$, we define
the prefix, inverse prefix, suffix, inverse suffix, infix, inverse infix, outfix and inverse outfix operations, respectively:
\begin{center}
$\begin{array}{|l|l|}\hline
\ \pref(L) = \set{w \sst wx \in L, x \in \Sigma^* }	&\ \pref^{-1}(L)  = \set{wx \sst w \in L, x \in \Sigma^* }\\
\ \suff(L) = \set{w \sst xw \in L, x \in \Sigma^* }	&\ \suff^{-1}(L) = \set{xw \sst w \in L, x \in \Sigma^* }\\
\ \infx(L) = \set{w \sst xwy \in L, x,y \in \Sigma^* }	&\ \inf^{-1}(L) = \set{xwy \sst w \in L, x,y \in \Sigma^* }\\
\ \outf(L) = \set{xy \sst xwy \in L, w \in \Sigma^* } 	&\ \outf^{-1}(L) = \set{xwy \sst xy \in L, w \in \Sigma^* }
\\\hline
\end{array}$
\end{center}
\end{definition}

We generalize the outfix relation to the notion of embedding \cite{JKT}
(introduced in \cite{longChinese}):
\begin{definition} 
The $m$-embedding of a language $L \subseteq \Sigma^*$ is 
$\emb(L, m) = \{w_0 \cdots w_m \sst w_0 x_1 \cdots w_{m-1} x_m w_m \in L$,
$w_i \in \Sigma^*, 0 \leq i \leq m, x_j \in \Sigma^*, 1 \leq j \leq m\}$.
We define the inverse as follows:
$\emb^{-1}(L, m) = \{w_0 x_1 \cdots w_{m-1} x_m w_m \sst  w_0 \cdots w_m \in L$,
$w_i \in \Sigma^*, 0 \leq i \leq m, x_j \in \Sigma^*, 1 \leq j \leq m$
$\}$.
\end{definition}
Note that $\outf(L) = \emb(L, 1)$ and $\outf^{-1}(L) = \emb^{-1}(L, 1)$.

We emphasize again that although these operations are defined
via inverses, most are very simple when viewed as insertion operations
with $\pref^{-1}(L) = L \Sigma^*, \suff^{-1}(L) = \Sigma^* L$,
and $\inf^{-1}(L) = \Sigma^* L \Sigma^*$. The inverse outfix
operation inserts a word at an arbitrary position of every word, where
the inverse $m$-embedding inserts $m$ arbitrary words in every word.

A language $L$ is called \textit{prefix-free} if, for all words $x,y \in L$, where $x$ is a prefix of $y$, then $x=y$.

A {\em one-way $k$-counter machine} is a tuple $M = (k,Q,\Sigma, \lhd, \delta, q_0, F)$, where
$Q, \Sigma, \lhd, q_0,F$ are respectively the finite set of states, the input alphabet, the right end-marker (not in $\Sigma$), the initial state in $Q$, and the set of final states, which is a subset of $Q$. The transition function $\delta$ (defined as in \cite{Ibarra1978} except with only a right end-marker since these machines only use one-way inputs) is a partial function from $Q \times (\Sigma \cup \{\lhd\}) \times \{0,1\}^k$ into the family of subsets of $Q \times \{{\rm S},{\rm R}\} \times  \{-1, 0, +1\}^k$, such that if $\delta(q,a,c_1, \ldots, c_k)$ contains $(p,d,d_1, \ldots, d_k)$ and $c_i =0$ for some $i$, then $d_i \geq 0$ to prevent negative values in any counter.
The symbols ${\rm S}$ and ${\rm R}$ indicate the direction of input tape head movement, either {\em stay} or {\em right} respectively. The machine $M$
is {\em deterministic} if every element mapped by $\delta$ is to a subset
of size one. The machine $M$ is {\em non-exiting} if there are no transitions defined on final states.
A {\em configuration} of $M$ is a $k+2$-tuple $(q, w , c_1, \ldots, c_k)$ representing the fact
that $M$ is in state $q$, $w$ (either in $\Sigma^*$ or $\Sigma^* \lhd$) is the remaining input, and $c_1, \ldots, c_k\in \natzero$ are the
contents of the $k$ counters. The relation $\vdash_M$ is defined between configurations, where $(q, aw, c_1, \ldots , c_k) \vdash_M (p, w'$ $, c_1 + d_1, \ldots, c_k+d_k)$, if
$(p, d, d_1, \ldots, d_k) \in \delta(q, a, \pi(c_1), \ldots, \pi(c_k))$ where $d \in \{{\rm S}, {\rm R}\}$ and
$w' =aw$ if $d={\rm S}$, and $w' = w$ if $d={\rm R}$. We let $\vdash^*_M$ be the reflexive, transitive closure of $\vdash_M$. And, for $m\in \natzero$, let $\vdash^m_M$ be the application of $\vdash_M$ $m$ times.
A word $w\in \Sigma^*$ is accepted by $M$ if
$(q_0, w\lhd, 0, \ldots, 0) \vdash_M^* (q, \lhd, c_1, \ldots, c_k)$, for some $q \in F$, and
$c_1, \ldots, c_k \in \natzero$. A derivation between the initial configuration and a final configuration is called an {\em accepting computation}. 
The language accepted by (final state in) $M$, denoted by $L(M)$,
is the set of all words accepted by $M$.

The machine $M$ is $l$-reversal-bounded if, in every accepting computation, the count on each counter alternates between increasing and decreasing at most $l$ times.

We denote by $\NCM(k,l)$ the family of languages accepted by one-way
nondeterministic $l$-reversal-bounded $k$-counter machines.
We denote by $\DCM(k,l)$ the family of languages accepted by
one-way deterministic $l$-reversal-bounded $k$-counter machines.
The union of the families of languages are denoted by
 $\NCM = \bigcup_{k,l \geq 0} \NCM(k,l)$ and
 $\DCM = \bigcup_{k,l \geq 0} \DCM(k,l)$.

Given a $\DCM$ machine $M =(k,Q,\Sigma, \lhd, \delta, q_0, F)$, the language accepted by \textit{final state without end-marker} is the set of words $w$ such that $(q_0, w\lhd, 0, \ldots, 0) \vdash_M^* (q', a\lhd, c'_1, \ldots, c'_k ) \vdash_M (q, \lhd, c_1, \ldots, c_k)$, for some $q \in F$, $q' \in Q$, $a \in \Sigma$, $c_i, c'_i \in \natzero, 1 \leq i \leq k$.
Such a machine does not ``know" when it has reached the end-marker $\lhd$. The state that the machine is in when the last letter of input from $\Sigma$ is consumed
entirely determines acceptance or rejection. It would be equivalent to require
$(q_0, w, 0, \ldots, 0) \vdash_M^* (q, \lambda, c_1, \ldots, c_k), w \in \Sigma^*$, for some $q \in F$, but we continue to use $\lhd$ for compatibility with the end-marker definition.
We use $\DCMNE(k,l)$ to denote the family of languages accepted
by these machines by final state without end-marker when they have $k$ counters that are $l$-reversal-bounded.
We define $\DCMNE = \bigcup_{k,l \geq 0} \DCMNE(k,l)$.


We denote by $\TwoDCM(1)$ the family of languages accepted by two-way deterministic finite machines (with both a left and right input tape end-marker) augmented by one reversal-bounded counter, accepted by final state. A machine of this form is said to be {\em finite-crossing} if there 
is a fixed $k$ such that
the number of times the boundary between any two adjacent input
cells is crossed is at most $k$ times, and a machine of this form is {\em finite-turn bounded}
if there is a fixed $t$ where $M$ makes at most $t$ changes of direction on the input tape for every computation
\cite{Gurari1981220}. Note a finite-turn machine is finite-crossing, but the converse does not hold in general.
The family $\NPCM$ ($\DPCM$) is defined by languages accepted by
one-way nondeterministic (deterministic) machines with an unrestricted
pushdown augmented by reversal-bounded counters \cite{Ibarra1978}.


\section{Closure for Insertion and Concatenation Operations}


Closure under concatenation is difficult for $\DCM$ languages because of determinism. However, certain special cases are demonstrated where
closure can be obtained. Towards this, a comparison of 
$\DCM$ to $\DCMNE$ will be made. This is important as it will
be shown that $\DCMNE$ is closed under right concatenation with
regular languages, although this will be shown not to be true generally for $\DCM$. However, when only one reversal-bounded counter is used,
the end-marker will be shown to not change the capacity. This will show
that $\DCM$ languages defined by machines with one reversal-bounded counter are closed under right
concatenation with regular languages. In addition, closure under left concatenation with prefix-free regular languages will be shown.
These results serve to demonstrate that $\DCM$ languages are strictly more powerful with the end-marker, but add no power
to $\DCM(1,l)$. This is in contrast to deterministic pushdown automata which do not need a right input end-marker.

\begin{lemma}
\label{lem:noEndMarker}
For any $l \geq 1$, $\DCM(1,l) = \DCMNE(1,l)$.
\end{lemma}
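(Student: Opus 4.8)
The plan is to establish the two inclusions $\DCMNE(1,l)\subseteq\DCM(1,l)$ and $\DCM(1,l)\subseteq\DCMNE(1,l)$ separately; the first is routine and the second is the substantial direction. For the routine inclusion, given a one-way deterministic one-counter $l$-reversal-bounded machine $M$ accepting $L$ by final state without end-marker, I would let $M'$ run the $\Sigma$-transitions of $M$ verbatim on its input and, on reaching $\lhd$, make a single counter-free move into a new accepting state if the current state of $M$ lies in $F$ and into a rejecting sink otherwise. Then $M'$ is deterministic, uses one counter, and makes no more counter reversals than $M$, so $L\in\DCM(1,l)$. (This argument does not use that there is only one counter.)

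For $\DCM(1,l)\subseteq\DCMNE(1,l)$, let $M=(1,Q,\Sigma,\lhd,\delta,q_0,F)$ and, for a word $w$, let $(q_w,c_w)$ be the state and counter value of $M$ immediately after it has read all of $w$. Then $w\in L(M)$ iff the computation of $M$ continued on $\lhd$ from $(q_w,\lhd,c_w)$ reaches a state of $F$. The first step is a structural lemma about this tail computation: it is an \emph{input-free} deterministic one-counter $(\le l)$-reversal-bounded computation, and for any such computation there are computable constants $t$ and $p$ such that whether it reaches $F$ starting from $(q,\lhd,c)$ depends only on $q$, $\min(c,t)$ and $c\bmod p$. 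I would prove this by following the at most $l+1$ monotone phases of the counter: within a phase, as long as the counter stays positive the state evolution is autonomous and within $|Q|$ steps enters a cycle of length at most $|Q|$; a cycle whose net effect on the counter is nonzero drives the counter monotonically toward $0$, and the state together with the small residual counter value at which the computation leaves this regime is a function of $c$ modulo the cycle's net effect, after which the next phase begins — and the thresholds and periods accumulated over the phases remain bounded in $|Q|$ and $l$.

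The second step builds $M'\in\DCMNE(1,l)$ that simulates $M$ step for step while it reads $w$, keeping in its finite control the state $q_w$, the residue $c_w\bmod p$, and a quantity $\gamma$ that equals $c_w$ when $c_w<t$ and is a flag meaning ``$c_w\ge t$'' otherwise, and keeping in its single counter the \emph{shifted} value $c'=\max(c_w-t,0)$. The point of the shift is that the test ``$c_w=t$'' becomes the test ``$c'=0$'', which $M'$ can perform: increments of $c_w$ update $\gamma$ and $c'$ unambiguously, and on a decrement of $c_w$ made while $\gamma$ is the flag ``$c_w\ge t$'' the zero-test on $c'$ tells $M'$ whether $c_w$ was exactly $t$ (so $c'$ stays $0$ and $\gamma$ becomes $t-1$) or larger (so $M'$ decrements $c'$ and keeps the flag); hence $\gamma$ is never lost, and $M'$ also knows $\pi(c_w)$, which is $1$ exactly when $\gamma\ne 0$, so it can track $M$ faithfully. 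When the input is exhausted $M'$ has $q_w$, $c_w\bmod p$ and $\min(c_w,t)$ available, so by the structural lemma it can decide whether $w\in L(M)$; declare final exactly the control states for which that decision is ``accept''. Since $c'$ moves only when $c_w$ does, $M'$ remains $l$-reversal-bounded.

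I expect the main obstacle to be exactly this bookkeeping — keeping $\min(c_w,t)$ current in $M'$'s control with no chance to compute anything after the input is consumed — and the counter-shift is the device that resolves it, trading the undetectable threshold $t$ for the detectable threshold $0$. This is also precisely where having only one counter matters: with two or more counters the tail computation of $M$ can, for example, compare two counters against one another, and its outcome is then no longer a function of boundedly much information about each counter separately, so both the construction and the statement fail for $k\ge 2$.
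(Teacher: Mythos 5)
Your proposal is correct, and the final construction is in spirit the one the paper uses: keep the ``shifted'' value $\max(c-t,0)$ on the counter so that the otherwise undetectable threshold $t$ becomes a detectable zero-test, keep $\min(c,t)$ and a residue modulo a period in the finite control, and declare final exactly those control configurations from which the end-marker phase of $M$ would accept. Where you genuinely differ is in how the threshold/period structure is justified. The paper defines, for each state $q$, the unary language $L(q)=\{a^i \mid (q,\lhd,i)\vdash_M^* (q_f,\lhd,c),\ q_f\in F\}$, observes it is a one-counter $\DCM$ language, hence semilinear, hence (being unary) regular, and then reads off the threshold $t$ and period from the tail-and-loop shape of unary DFAs (normalized to a common $t$ and loop length across all $q$); your route instead analyzes the input-free tail computation directly, via the autonomous state cycles while the counter is positive and the finitely many ``small-counter'' restart configurations. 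Your argument is more self-contained (no appeal to semilinearity or unary automata), while the paper's is shorter given the cited machinery; both yield the same simulation. One point you should tighten: the paper's definition of $l$-reversal-bounded constrains only \emph{accepting} computations from the initial configuration, so your claim that the tail computation from an arbitrary $(q,\lhd,c)$ has at most $l+1$ monotone phases is not literally justified. This is harmless — either restrict attention to configurations actually reached at the end of the input (for which any accepting continuation stays within the global bound $l$), or note that your large-counter/small-counter analysis gives eventual periodicity of the acceptance set without using any reversal bound at all — but as written the appeal to ``at most $l+1$ phases'' needs that one-line repair. Your closing remark about why the argument is specific to one counter is accurate and matches the paper's later separation results for $k\ge 2$.
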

\begin{proof}
Trivially, $\DCMNE(1,l) \subseteq \DCM(1,l)$, by removing all transitions defined on the end-marker.

For the reverse containment, consider $M=(1,Q,\Sigma,\lhd,\delta,q_0,F)$ accepting $L$ by final state. A machine $M'$ will be built such that the language accepted by $M'$ by final state without end-marker is equal to $L(M)$.

We assume without loss of generality that $\delta$ is a total function.
Let $|Q| = n$.
For each state $q \in Q$, define the language
$$L(q)= \{a^i \mid (q,\lhd,i) \vdash_M^* (q_f,\lhd,c), q_f \in F, c\in \natzero \},$$ the set of counter values which lead
to acceptance from the end-marker $\lhd$ and state $q$.
This language can be accepted by a machine in $\DCM$ with one counter, by adding the input $i$ to the counter, then simulating $M$ from state $q$, and accepting if $M$ does.
Since all $\DCM$ languages are semilinear \cite{Ibarra1978}, $L(q)$ is unary, all unary semilinear languages are regular \cite{harrison1978}, then $L(q)$ is regular.
Thus we can accept $L(q)$ with a DFA, say $D(q)= (Q_{D(q)},\{a\},\delta_{D(q)},s_{D(q)},F_{D(q)})$.

Because these languages are unary, the structure of the DFAs are relatively simple, and well-known (see \cite{Chrobak} for
a seminal work on unary finite automata, and \cite{unaryReg} for
the informal language used here).
Every unary DFA with $m$ states is isomorphic to one with states $\{0, \ldots, m-1\}$
where there exists some state $t$, and there is a transition from $i$ to $i+1$, for all $0 \leq i <t$ (the ``tail''), and there is a transition from $j$ to $j+1$ for all  $t \leq j <m-2$, plus a transition from $m-1$ to $t$ (the ``loop''), and no other transitions.
Furthermore, we can assume without loss of
generality that each $D(q)$, for $q\in Q$, has the same length of tail,
$t>0$, equal to the maximum of the tail lengths of all of the original
DFAs $D(p)$, over all $p \in Q$. This can be done as the tail of a unary DFA can be made longer by adding additional states to the tail and shifting the final states in the loop.
Similarly, it can be assumed that the loops are all of the same length
$l$ by making it the length that is the least common multiple
of the original lengths (thus making the loop
length a multiple of all the originals). Thus all $D(q)$, for $q \in Q$
have tail length $t>1$, loop length $l$, and $m$ states, and they all
differ only in final states. Let $\delta_D(i)$ be the state
of all $D(q)$ machines after reading $a^i$, $i \geq 0$. Then,
$\delta_D(i)$ is $i$ if $i \leq t$, and $i - t \mod l$ otherwise.

The intuition for the construction of $M'$ is as follows. The machine $M'$ simulates $M$, and after reading $w$, if $M$ has counter value $c$,
$M'$ has counter value $c-t$ if $c> t$, with $t$ stored in the finite control. If $c \leq t$, then
$M'$ stores $c$ in the finite control with zero on the counter.
This allows $M'$ to know what counter value $M$ would have after reading a given word,
but also to know when the counter value is less than $t$ (and the specific value less than $t$).
In the finite control, in addition to simulating $M$, $M'$ simulates each $D(q)$, for all $q \in Q$, in parallel in such a way that the (unique, for all DFAs) state $\delta(i)$ is stored when the counter of $M$ is $i$. To do this, $M'$ stores two integers, $(d,j)$, where
$0 \leq d \leq t, 0 \leq j \leq l$, and if $i<t$, then 
$(d,j) = (i,0)$, and if $i \geq t$, then $(d,j) = (t, i-t \mod l)$.
Thus, we call the first component the ``tail'' counter, and the second
component the ``loop'' counter.
Then $d+j$ is the state of $D(q)$ after reading $a^i$.
Each time $M$ increases the counter, from $i$ to $i+1$, the state of each $D(q)$ is determined by increasing the appropriate bounded
counter (the first component of it is not yet $t$, and the second component otherwise). Each time $M$ decreases the counter from $i$ to $i-1$, the state of each $D(q)$ changes deterministically by decreasing the second component $j$ by one modulo $l$ if $i>t$ (going ``backwards'' in the loop), and if $i \leq t$, then the counter of $M'$ will be zero, and thus the simulation of each $D(q)$ can tell when to switch deterministically from decreasing the loop counter to the tail counter.
Then, when $M$ is in state $q$, $M'$ can tell if the current
counter value would be accepted using the appropriate DFA $D(q)$.

We now provide the construction in detail:

The machine $M'$ has state set
$Q_{M'} = (Q \times \set{0, \ldots, t} \times \{0\}) \cup (Q \times \{t\} \times \{0, \ldots, l\})$.
The final states of $M'$ are of the form 
$(q, d, j)$ if
either ($d<t, j=0, d \in F_{D(q)}$), or ($d = t, t+j \in F_{D(q)}$).

If $\delta(q,b, 0) = (p,T,\alpha)$, for some $q,p \in Q, b\in \Sigma, T \in \{ {\rm S}, {\rm R}\}, \alpha \in \{0,1\}$, we add the following transition to $\delta_{M'}$:
\begin{enumerate}
\item \label{con1} $\delta_{M'}( (q,0,0 ), b, 0) = ((p, \alpha, 0), T, 0) $.
\end{enumerate}

Also, if $\delta(q,b, 1) = (p,T, \alpha)$, for some $q,p \in Q, T \in \{ {\rm S}, {\rm R}\}, \alpha \in \{-1,0,1\}$
we add the following transition in $\delta_{M'}$
for every $s = (q, d, j) \in Q_{M'}$:
\begin{enumerate}

\setcounter{enumi}{1}

\item \label{con2} $\delta_{M'}( s, b, 0) = ((p,d+\alpha, 0),T, 0)$  if  $j = 0, 0 \leq d \leq t$, and $0 \leq d + \alpha \leq t$,

\item \label{con4} $\delta_{M'}( s, b, y) = ((p,t, j + \alpha \mod l),T, \alpha)$ for $y \in \set{0,1}$, if $d = t$ and $\alpha \in \{0,1\}$,

\item \label{con5} $\delta_{M'}( s, b, 1) = ((p,t, j-1 \mod l), T,\alpha)$ if $d = t$ and $\alpha = -1$.

\end{enumerate}

\begin{claim} \label{claim1}
For all $m \in \natzero$,
if $(q_0, w=uv, 0) \vdash_{M}^m (q, v, c)$ where $u,v \in \Sigma^*, q \in Q, c \in \mathbb{N}_0$,
then
\begin{equation}
((q_0,0,0), uv, 0) \vdash_{M'}^m
 ((q, t, c-t \mod l), v, c-t),
\label{claim1eq1}
 \end{equation} when $c>t$,
and
\begin{equation} ((q_0,0,0), uv, 0) \vdash_{M'}^m
((q, c, 0), v, 0),
\label{claim1eq2}
\end{equation} when $c \leq t$.
\end{claim}
\begin{proof}
We perform induction on $m$.

If $m=0$ then $q = q_0, u = \lambda, c = 0, c\leq t$, thus the second condition is true.

Consider $m \geq 0$, and assume the implication holds for $m$. We will show it holds for $m+1$.

Suppose $(q_0, uv, 0) \vdash_{M}^{m+1} (q, v, c)$. Then for some state $p\in Q$, $a \in \Sigma \union \set{\lambda},$
and $c' \in \natzero{}$, we have
$(q_0, uv, 0) \vdash_{M}^m (p, av, c') \vdash_M^1 (q,v,c)$, with the last
transition via $x$.
We know that $c \in \set{c'-1, c', c'+1}$.

\textbf{Case: $c \geq t, c' \geq t$}. Notice that when $c' = t$,
Equations (\ref{claim1eq1}) and (\ref{claim1eq2}) coincide. 
Then, by our hypothesis, we have
$$((q_0,0,0), uv, 0) \vdash_{M'}^m ((p, t, c' - t \mod l), av, c'-t).$$

If $c = c'-1$ (so $c' > t$ and $c' - t>0$), then we know that
$((p, t, c' - t \mod l), av, c'-t) \vdash_{M'} ((q, t, c' - t - 1 \mod l), v, c'-t-1) = ((q, t, c - t \mod l), v, c-t)$
by the transition created by rule (4) from $x$.

If $c = c'$, then we know that
$((p, t, c' - t \mod l), av, c'-t) \vdash_{M'} ((q, t, c' - t \mod l), v, c'-t) = ((q, t, c - t \mod l), v, c-t)$
by transition rule (3).

If $c = c' + 1$, then we know
$((p, t, c' - t \mod l), av, c'-t) \vdash_{M'} ((q, t, c' - t + 1 \mod l), v, c'-t+1) = ((q, t, c - t  \mod l), v, c-t)$
by transition rule (3).

\textbf{Case: $c \leq t, c' \leq t$}.
By our hypothesis we have
$$((q_0,0,0), uv, 0) \vdash_{M'}^m ((p, c', 0), av, 0),$$ $a \in \Sigma \cup \{\lambda\}$.

If $c = c' -1$, $c = c'$ or $c = c'+1$, then the implication holds by transition rule (2), unless
$c' = 0$, in which case it holds by transition rule (1).

Thus we have shown that the implication true for $M'$ in $m+1$ steps, and is therefore true for all $m$.
\qed \end{proof}

\begin{claim} \label{claim2}
For all $m \in \natzero$,
let
$$((q_0,0,0), w=uv, 0) \vdash_{M'}^m ((q, d, j), v, e),$$
where $u,v \in \Sigma^*$.
Then the following are true:
 \begin{enumerate}
\item \label{item1} $d+ j = \delta_D(e+d)$,
\item \label{item4} $(q_0, uv, 0) \vdash_{M}^m (q, v, e+d)$,
\item \label{item2} $e> 0$ or $j>0$ only if $d=t$.
\end{enumerate}
\end{claim}
\begin{proof}
We perform induction on $m$.

If $m=0$ then $d = 0 < t, e= 0, j =0$, and thus (3) is true,
and conditions (1) and (2) are immediate.

Consider $m \geq 0$, and assume the implication holds for $m$.

Suppose $((q_0,0,0), uv, 0) \vdash_{M'}^{m+1} ((q, d, j), v, e)$.
Then
\begin{eqnarray*}
((q_0,0,0), uv, 0) &\vdash_{M'}^{m}& ((q', d', j'), av, e')\\
&\vdash_{M'}&  ((q, d, j), v, e),
\end{eqnarray*} $a \in \Sigma \cup \{\lambda\}$, by some last transition $x$.
Then, by the hypothesis,
$d' + j' = \delta_D(e' + d'), (q_0, uv, 0) \vdash_M^m (q', av, e' + d')$,
and $e'>0$ or $j'>0$ only if $d' =t$. 

Suppose $e'>0$. Then $x$ must be of type (3) or (4) in the
construction, and $d' = t = d$, therefore the third condition
is true. Then the transition $x$ that changes the counter by
$\alpha$ is created from a transition that changes the counter
of $M$ similarly. Thus, the second condition holds. For the first condition, $\delta_D(e+d)$ must be in the ``loop'' of $D(q_i)$ since
$t=d$, and $\delta_D(e+d) = \delta_D(e' + d' + \alpha) =
 d + (j' + \alpha \mod l) = d + j$.

Suppose $e' = 0$. Then $x$ must be of type (1), (2), or (3). If it
is type (3), then $d' = t = d$ (here $\alpha \in \{0,1\}$), then
the conditions hold just like the case above. For
both types (1) and (2), then $e' = e = j = j' = 0$, and so
condition 3 is true. For both, $d'$ changes
to $d$ in the same way as the counter of $M$. Then
the second condition holds. For the first condition, 
$\delta_D(e+d) = \delta_D(d) = d = e+d$.
\qed \end{proof}

Then, $w$ is accepted by final state in $M$, if and only if
$(q_0, w \lhd, 0) \vdash_M^* (q, \lhd, c) \vdash_M^* (q_f, \lhd, c')$, for some $q \in Q, q_f \in F$, and $(q,\lhd, c)$ is the first configuration to reach $\lhd$, if and only if 
$(q_0, w\lhd, 0) \vdash_M^* ( q, \lhd, c)$, for some $q \in Q, (q, \lhd,c)$ is the first configuration to reach $\lhd$, and $a^c \in L(q)$ (from
the definition of $L(q)$), if and only if
$(q_0, w, 0) \vdash_M^* (q, \lambda,c)$ for some $q \in Q$ such that $a^c \in L(q)$. We will show that this is true if and only if $M'$ accepts $w$ by final state without end-marker.

Assume $(q_0, w, 0) \vdash_M^* (q, \lambda,c)$ for some $q \in Q$ such that $a^c \in L(q)$. Then $\delta_D(c)$ is final in $D(q)$, and
$$((q_0,0,0),w,0) \vdash_{M'}^*
(q,f,j),\lambda,e),$$ for some $f,e,j$ by Claim
\ref{claim1}, where $c>t$ implies $f= t, e = c-t$, and $j = c-t \mod l$,
and $c\leq t$ implies $f = c, j = e = 0$. In the second
case, it is immediate that $(q,c,0)$ is final since $a^c \in L(q)$. In the
first case, it follows since $j = c-t \mod l$ and by the structure
of $D(q)$ that 
$\delta_D(c) = t + j$.
Then $(q,f,j)$ is final in $M'$
and $M'$ accepts $w$ by final state without end-marker. 

Conversely, assume
$M'$ accepts $w$ by final state without end-marker. Then
$((q_0,0,0),w,0)\vdash_{M'}^*
((q,f,j),\lambda,e)$, either $f = t$ and $t+j \in F_{D(q)}$,
or $f<t, j = 0$ and $f \in F_{D(q)}$.
Then $(q_0, w, 0) \vdash_M^* (q, \lambda, e+f)$ with
$\delta_D(e+f) = f+j$ by Claim \ref{claim2}. 
If $f = t$, then $a^{e+f} \in L(q)$, and we are done.
If $f<t$, then $e = f = 0$ by Claim \ref{claim2},
$\delta_D(e+f) = \delta_D(f) = f \in F_{D(q)}$, and
$a^{e+f} \in L(q)$.
Thus,
$(q_0,w,0)\vdash_M^* (q, \lambda, e+f)$ for some $1 \leq i \leq n$ such
that $a^{e+f} \in L(q)$.

Hence, $w$ is accepted by final state in $M$ if and only if $w$ is accepted by final state without end-marker in $M'$.
\qed \end{proof}

We will extend these closure results with a lemma about prefix-free $\DCMNE$ languages.
It  is known that a regular language is prefix-free
if and only if there is a non-exiting DFA accepting the language \cite{generalGeneral}. 
\begin{lemma}
Let $L \in \DCMNE$. Then $L$ is prefix-free if and only if
there exists a $\DCM$-machine $M$ accepting $L$ by final state without end-marker which is non-exiting.
\end{lemma}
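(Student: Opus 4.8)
The statement is an "if and only if," so I would prove the two directions separately, using the known fact (cited just above) that a regular language is prefix-free iff it is accepted by a non-exiting DFA, and adapting the idea to counter machines.

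\emph{($\Leftarrow$) Non-exiting machine implies prefix-free.} Suppose $M$ is a non-exiting $\DCM$-machine accepting $L$ by final state without end-marker. Since $M$ is non-exiting, there are no transitions out of any final state, so once $M$ enters a final state it is stuck and consumes no further input. Now suppose $x, y \in L$ with $x$ a prefix of $y$, say $y = xz$. Running $M$ on $y$: after consuming $x$ the machine is in some state $q$, and since $x \in L$ (accepted without end-marker), $q$ must be final — but more carefully, acceptance without end-marker means the state reached when the last letter of $x$ is consumed is final. If $z \neq \lambda$, then $M$ must take at least one more transition from a final state to continue reading $z$, contradicting non-exiting. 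Hence $z = \lambda$, i.e. $x = y$, so $L$ is prefix-free.

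\emph{($\Rightarrow$) Prefix-free implies a non-exiting machine exists.} Start with an arbitrary $\DCM$-machine $M = (k, Q, \Sigma, \lhd, \delta, q_0, F)$ accepting $L$ by final state without end-marker, with $L$ prefix-free. The goal is to build an equivalent non-exiting machine $M'$. The natural construction: whenever $M$ is about to make a move and there is a \emph{choice} (in the determined sense) of whether the current state is final, we want to ensure that once we "commit" to acceptance we stop. Because $L$ is prefix-free, if $w \in L$ then no proper extension $wz$ ($z \neq \lambda$) is in $L$; equivalently, from any configuration reached by reading a word in $L$, no further input can lead back to a final state. So I would take $M' $ to be $M$ with all transitions out of final states deleted. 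I must then argue $L(M') = L(M)$: the inclusion $L(M') \subseteq L(M)$ is clear (fewer computations); for $L(M) \subseteq L(M')$, take $w \in L(M)$, so reading $w$ in $M$ ends in a final state $q$ at the moment the last input symbol is consumed — and this computation never passed through a final state earlier, because if it had entered a final state after reading a proper prefix $w'$ of $w$, then $w' \in L(M) = L$ while $w \in L$ with $w'$ a proper prefix, contradicting prefix-freeness (here I need $w' \neq w$, which holds since the computation continued). Hence the same computation is valid in $M'$, so $w \in L(M')$.

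\textbf{Main obstacle.} The delicate point is matching the "acceptance without end-marker" semantics precisely: acceptance is determined by the state when the \emph{last} input letter is consumed, but the machine may keep making $\lambda$-moves and stay-moves afterward (reading $\lhd$ internally in the formalism). I need to be careful that "entering a final state after a proper prefix" genuinely yields membership of that proper prefix in $L$ — which it does, by the definition of $\DCMNE$ acceptance — and that deleting out-transitions from final states does not disrupt any legitimate accepting computation. The argument hinges entirely on prefix-freeness forbidding a computation from visiting a final state at two distinct input-consumption points corresponding to $w'$ and $w$ with $w'$ a proper prefix of $w$; I would state this as the key lemma and then the rest is bookkeeping. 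No reversal-bound or counter issues arise since $M'$ inherits $M$'s counters and reversal behaviour unchanged.
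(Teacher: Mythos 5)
Your ($\Leftarrow$) direction is fine and matches the paper's. The gap is in ($\Rightarrow$), precisely at the point you flag and then wave away: you claim that if the accepting computation on $w$ ``entered a final state after reading a proper prefix $w'$'', then $w' \in L$ ``by the definition of $\DCMNE$ acceptance.'' That is not true as stated. Acceptance without end-marker is witnessed only by the state reached at the instant the last letter of $w'$ is consumed, i.e., immediately after a right-moving transition; but the computation may reach a final state via a \emph{stay} transition taken while the head already sits on the first letter of $w$ beyond $w'$. In that case $w'$ need not be in $L$, prefix-freeness yields no contradiction, and yet your $M'$ (obtained by deleting all transitions out of final states) removes exactly the transition the computation needs next, so $L(M') \subsetneq L(M)$ is possible. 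Concretely, a machine for the prefix-free language $\set{ab}$ may consume $a$ into a non-final state, stay-move into a final state $f$ while scanning $b$, and then consume $b$ from $f$ into another final state; deleting the out-transitions of $f$ kills the only accepting computation of $ab$.

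The missing idea is a normalization step before pruning: first modify $M$ so that no stay transition enters a final state, by rerouting any such transition into a fresh non-final copy $q_f'$ that behaves exactly like $q_f$. This does not change the language accepted by final state without end-marker, since a word can only be accepted after a right-moving transition. After this normalization, every visit to a final state occurs at the moment an input letter is consumed, so it genuinely witnesses membership of the consumed prefix in $L$; your prefix-freeness argument then applies, and deleting transitions out of final states is safe. This is exactly how the paper proceeds; with that one extra step your construction and the rest of your bookkeeping (counters and reversal bounds unchanged) go through.
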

\begin{proof}
Let $L \in \DCMNE$, with $M$ a machine accepting $L$ by final state without end-marker. Then
$\{w \mid (q_0, w, 0) \vdash_M^* (q_f, \lambda, c), q_f \in F\}=L$.
Assume without loss of generality that no stay transitions can switch to a final state, because a word can only be accepted by final state without
end-marker after a transition moving right. Thus, any stay transition switching
to a final state $q_f$ can switch to a non-final state $q_f'$ that operates 
just like $q_f$.

$(\implies)$
Suppose $L$ is prefix-free. Construct $M'$ from $M$ such that all transitions out of final states are removed, and so $M'$ is not non-exiting. Then $L(M') \subseteq L(M)$ since all transitions of $M'$ are in $M$.
For the reverse containment, consider $w \in L(M)$ such that
$(q_0, w_0, i_{0,1}, \ldots, i_{0,k}) \vdash_M \cdots \vdash_M (q_n, w_n,i_{n,1}, \ldots, i_{n,k}), n \geq 0, q_n \in F, w_n = \lambda, w_0 = w, i_{0,1} = \cdots i_{0,k} = 0$, via transitions $\alpha_1, \ldots, \alpha_n$
respectively.
Assume that there exists $j < n$ such that $q_j \in F$. Thus, $n>0$. If $j>1$,
then $\alpha_{j-1}$ must be a right transition instead of a stay transition,
as no stay transition switches to a final state. But then
the sequence $\alpha_1, \ldots, \alpha_{j-1}$ (or the empty sequence if $j=1$),
is the computation accepting the right quotient of $w$ with $w_j$, which is a proper prefix of
$w$ since $j<n$ and since $\alpha_n$ must be a right transition. But $L$ is prefix-free, a contradiction. Thus, all of $q_0, \ldots, q_{n-1}$ are non-final,
and $\alpha_1, \ldots, \alpha_n$ are in $M'$.
Thus $L(M)\subseteq L(M')$ as well.

$(\impliedby)$
Suppose $M$ is non-exiting. Consider $w\in L$.
Then after reading $w$ deterministically, there are no transitions to follow, so $wx$ is not accepted for any $x \neq \lambda$.
Thus $L$ is prefix-free.

\qed \end{proof}


From this, we obtain a special case where $\DCM$ is closed under concatenation, if the first language can be both accepted by final state without end-marker, and is prefix-free. The construction considers a non-exiting machine accepting $L_1$ by final state without end-marker, where transitions into its final state are replaced by transitions into the initial state of the machine accepting $L_2$. 
\begin{theorem}
\label{concatenationprefixfreeDCM}
Let $L_1 \in \DCMNE(k,l), L_2 \in \DCM(k',l')$, with $L_1$ prefix-free.
Then $L_1 L_2 \in \DCM(k+k', \max(l,l') )$.
\end{theorem}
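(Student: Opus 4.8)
The plan is to combine the non-exiting-machine characterization of prefix-free $\DCMNE$ languages with a straightforward "hand-off" construction. First I would invoke the preceding lemma to obtain a $\DCM$-machine $M_1 = (k, Q_1, \Sigma, \lhd, \delta_1, q_{0,1}, F_1)$ that accepts $L_1$ by final state without end-marker and is \emph{non-exiting}, with $k$ counters that are $l$-reversal-bounded. Let $M_2 = (k', Q_2, \Sigma, \lhd, \delta_2, q_{0,2}, F_2)$ be a standard $\DCM$-machine accepting $L_2$ with $k'$ counters that are $l'$-reversal-bounded (using the end-marker in the usual way). The machine $M$ for $L_1 L_2$ has $k + k'$ counters: the first $k$ are used to simulate $M_1$ and the last $k'$ to simulate $M_2$. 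Since the two blocks of counters are never active simultaneously, the reversal bound of $M$ is $\max(l, l')$ and each counter is touched by only one of the two simulations.

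The construction itself: $M$ begins by simulating $M_1$ on its first $k$ counters, leaving the last $k'$ counters at $0$. Because $M_1$ is non-exiting, the only way a computation of $M_1$ can be ``continued'' is precisely at the moment it would enter a final state; at that point $M_1$ has, by definition of acceptance without end-marker, just consumed an input letter via a right-move and the prefix read so far lies in $L_1$. So whenever a transition of $M_1$ would move right into a state $q_f \in F_1$, I replace it in $M$ with a transition that moves right the same way but enters the \emph{initial state $q_{0,2}$ of $M_2$} instead, with the last $k'$ counters still at $0$; from that point on $M$ simulates $M_2$ on the remaining input (including the end-marker), accepting iff $M_2$ does. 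One must also handle the boundary case where $M_1$ would accept $\lambda$: if $q_{0,1} \in F_1$, then (since $M_1$ is non-exiting, $\lambda \in L_1$ forces $L_1 = \{\lambda\}$) we simply let $M$ be $M_2$, so assume $\lambda \notin L_1$ and the hand-off always occurs after reading at least one symbol; I should also arrange, as in the earlier lemma's proof, that no stay-transition of $M_1$ enters a final state, so the hand-off is always triggered by a right-move, cleanly splitting the input.

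For correctness I would argue both inclusions. If $w \in L_1 L_2$, write $w = xy$ with $x \in L_1$, $y \in L_2$; since $M_1$ is non-exiting and deterministic, its computation on $x$ reaches a final state exactly after consuming all of $x$ (it cannot reach a final state earlier, as $L_1$ is prefix-free and $M_1$ is non-exiting — any earlier final state would give a proper prefix of $x$ in $L_1$), at which moment $M$ hands off to $M_2$ with all $k'$ new counters zero, and $M_2$ then accepts $y\lhd$; hence $M$ accepts $w$. Conversely, a computation of $M$ accepting $w$ decomposes uniquely: it runs $M_1$ until the (first and only possible) hand-off after some prefix $x$, at which point $x \in L_1$ by the non-exiting-acceptance characterization, then runs $M_2$ on the suffix $y$ with $w = xy$, so $y \in L_2$ and $w \in L_1 L_2$. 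The reversal and counter bookkeeping is routine: counters $1,\dots,k$ only move during the $M_1$ phase (at most $l$ reversals each), counters $k+1,\dots,k+k'$ only during the $M_2$ phase (at most $l'$ reversals each), and $M$ is deterministic because $M_1$ and $M_2$ are and the hand-off is forced.

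The main obstacle — really the only subtlety — is ensuring the hand-off point is unambiguous and occurs in a state/counter configuration that is a legitimate starting configuration for $M_2$: this is exactly why we need $L_1$ prefix-free (so $M_1$'s accepting prefix is unique and there is no competing longer or shorter prefix in $L_1$) \emph{and} why we need the non-exiting machine (so that acceptance is detected precisely when the relevant prefix has been consumed, with no leftover end-marker processing and no way to ``fall off'' after the hand-off). Everything else, including the $\max(l,l')$ bound, follows because the two phases use disjoint counter blocks.
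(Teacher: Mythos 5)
Your proposal is correct and follows essentially the same route as the paper's proof: obtain a non-exiting machine for $L_1$ (via the preceding lemma on prefix-free $\DCMNE$ languages, with right-moves into final states), redirect transitions into $M_1$'s final states to the initial state of $M_2$, keep disjoint counter blocks, and argue both inclusions using prefix-freeness and the non-exiting property. The only additions beyond the paper's argument are explicit treatment of the $\lambda \in L_1$ boundary case and the reversal bookkeeping, which are fine.
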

\begin{proof}
Our construction is simple. Consider non-exiting $M_1$ accepting $L_1$ by final state without end-marker,
and $M_2$ accepting $L_2$. Assume without loss of generality that only
transitions that move right in $M_1$ switch to a final state.
We form $M'$ where $L(M') = L_1 L_2$. Indeed,
$M'$ has the states and transitions from $M_1, M_2$ combined, with the start state of $M_1$ as its start state.
Any transition into an accepting state of $M_1$ is replaced by an equivalent transition
into the starting state of $M_2$.
The accepting states are the accepting states of $M_2$.
The machine has separate counters for the counters of $M_1$ and $M_2$, each of which performs
the same reversals they would in their original machine.

Let $w \in L_1$ and $x \in L_2$. Since $M_1$ accepts without end-marker,
and since no proper prefix of $w$ leads to an accepting state of $M_1$,
we know reading $w$ in $M_1$ leads to an accepting state in $M_1$, even without reading $\lhd$.
So, in $M'$, we know that reading $w$ will lead to the start state of $M_2$.
Reading $x$ from the start of $M_2$ leads to an accepting state, since $x \in L_2$.
Thus reading $x$ from the start of $M_2$ in $M'$ leads to acceptance.

Let $y \in L(M')$. Then
$M'$ starts in the start of $M_1$, so the only path to an accepting state is through the start
of $M_2$. Thus there is some division of $y$ into $w,x$ where reading $w$ in $M_1$ leads to acceptance
(because it leads to the start state of $M_2$ in $M'$), and reading $x$ in $M_2$ leads to acceptance,
because we got to an accepting state in $M'$. Thus $y \in L_1 L_2$.
\qed \end{proof}
Notice that it is also possible to make 
$L_1 L_2 \in \DCM( \max\{k,k'\}, l+l' + 1)$ by resetting and
reusing the same counters for $M_1$ and $M_2$.

If we remove the condition that $L_1$ is prefix-free however, the theorem is no longer true, as we will see in the next section that even the regular language $\Sigma^*$ (which is in $\DCMNE(0,0)$) concatenated with a $\DCM$ language produces a language outside $\DCM$.

\begin{corollary}
\label{prefixfreeregular}
Let $L \in \DCM(k,l), R \in \REG$, where $R$ is prefix-free.
Then $RL \in \DCM(k,l)$.
\end{corollary}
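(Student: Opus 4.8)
The plan is to derive Corollary \ref{prefixfreeregular} as an immediate consequence of Theorem \ref{concatenationprefixfreeDCM} together with Lemma \ref{lem:noEndMarker}. The key observation is that a prefix-free regular language $R$ is in particular a $\DCM$ language with $0$ counters, hence in $\DCM(1,l)$ for any $l \geq 1$ (a DFA is trivially a one-counter machine that never touches its counter, so it is $l$-reversal-bounded for every $l$). By Lemma \ref{lem:noEndMarker}, $\DCM(1,l) = \DCMNE(1,l)$, so $R \in \DCMNE(1,l)$, and $R$ remains prefix-free. Now Theorem \ref{concatenationprefixfreeDCM} applies with $L_1 = R \in \DCMNE(1,l)$ (prefix-free) and $L_2 = L \in \DCM(k,l)$, giving $RL \in \DCM(1+k, \max(l,l))= \DCM(k+1,l)$.

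The only gap between this and the stated bound $RL \in \DCM(k,l)$ is the extra counter: Theorem \ref{concatenationprefixfreeDCM} as stated adds the counters of both machines, but here $M_1$ accepting $R$ uses no counter at all. So the cleaner route is to not invoke Lemma \ref{lem:noEndMarker} as a black box on a one-counter padding, but instead to observe directly that $R$, being prefix-free regular, is accepted by a non-exiting DFA (by the cited result of \cite{generalGeneral}), which is a $\DCMNE(0,0)$ machine, and then re-run the construction in the proof of Theorem \ref{concatenationprefixfreeDCM} with $M_1$ this DFA and $M_2$ a $\DCM(k,l)$ machine for $L$. That construction combines the states and transitions, redirects transitions entering a final state of $M_1$ into the start state of $M_2$, and keeps only the counters of $M_2$ — so the resulting machine has exactly $k$ counters, each still $l$-reversal-bounded. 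Determinism is preserved because $M_1$ is deterministic and non-exiting (so there is never a choice between "continue in $M_1$" and "jump to $M_2$"), and $M_2$ is deterministic. Correctness is exactly the argument already given in the proof of Theorem \ref{concatenationprefixfreeDCM}, specialized to the counter-free $M_1$.

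Concretely, the steps I would carry out: first, recall that $R$ prefix-free regular implies $R$ is accepted by a non-exiting DFA $M_1$, equivalently $R \in \DCMNE(0,0)$ with a non-exiting machine; second, cite Theorem \ref{concatenationprefixfreeDCM} (or re-invoke its construction) with $L_1 = R$, $k=0$, and $L_2 = L \in \DCM(k,l)$; third, read off that the resulting machine has $0 + k = k$ counters with reversal bound $\max(0,l) = l$, hence $RL \in \DCM(k,l)$.

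I do not expect any real obstacle here — this corollary is essentially bookkeeping on top of the already-proved theorem. The one thing to be careful about is making sure the non-exiting DFA really sits inside the hypothesis of Theorem \ref{concatenationprefixfreeDCM} as a $\DCMNE$ machine with the "only right-moving transitions enter final states" normalization; since a DFA reads one symbol per step and moves right, this is automatic, so no adjustment is needed. Thus the proof is a two-line deduction, and the main content has already been done in Lemma \ref{lem:noEndMarker}, the non-exiting characterization, and Theorem \ref{concatenationprefixfreeDCM}.
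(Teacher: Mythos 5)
Your proposal is correct and follows essentially the same route the paper intends: a prefix-free regular $R$ is accepted by a non-exiting DFA, hence lies in $\DCMNE(0,0)$, and Theorem \ref{concatenationprefixfreeDCM} applied with $L_1 = R$ and $L_2 = L \in \DCM(k,l)$ gives $RL \in \DCM(0+k,\max(0,l)) = \DCM(k,l)$. Your initial detour through $\DCM(1,l)=\DCMNE(1,l)$ is unnecessary, as you yourself note, but the final two-line deduction is exactly the paper's argument.
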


In contrast to left concatenation of a regular language with a $\DCM$ language (Corollary \ref{prefixfreeregular}), where it is required that $R$ be prefix-free (the regular language is always in $\DCMNE$), for right concatenation, it is only required that it be a $\DCMNE$ language. We will see in the next section that this is not true if the restriction that $L$ accepts by final state without end-marker is removed.

The following proof takes a $\DCM$ machine $M_1$ accepting by final state without end-marker, and $M_2$ a DFA accepting $R$, and builds a $\DCM$ machine $M'$ accepting $LR$ by final state without end-marker. 
\begin{theorem}
\label{DCMNEwithREG}
Let $L \in \DCMNE(k,l)$, $R \in \REG$.
Then $LR \in \DCMNE(k,l)$.
Hence, $\pref^{-1}(L) = L \Sigma^* \in \DCMNE(k,l)$.
\end{theorem}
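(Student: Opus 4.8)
The plan is to build a $\DCMNE$ machine $M'$ that on input $w$ simulates the $\DCMNE$ machine $M_1$ for $L$ while simultaneously running a DFA $M_2 = (Q_2, \Sigma, \delta_2, s_2, F_2)$ for $R$ as a ``monitor'' that is fed the portion of the input consumed so far. The key observation making this work is exactly the one the paper has been emphasizing: since $M_1$ accepts by final state \emph{without} end-marker, the state $M_1$ is in after consuming a prefix $u$ of the input already determines whether $u \in L$, with no need to look ahead to $\lhd$. So $M'$ can, at every point where $M_1$ is in a final state, ``guess''—but deterministically, by recording it in the control—that the $L$-part has ended there and that the rest of the input should be checked against $R$. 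The subtlety is that $M'$ must be deterministic, so it cannot nondeterministically split the input; instead it carries along, in its finite control, the set of all DFA states reachable by feeding $M_2$ every suffix-decomposition consistent with the history of $M_1$'s final-state visits.

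Concretely, the control of $M'$ will consist of: the current state of $M_1$, together with a \emph{subset} $S \subseteq Q_2$ of DFA states, intended to be $S = \{ \delta_2^*(s_2, x) : w = ux,\ u \text{ a prefix consumed so far with } M_1 \text{ in a final state after } u \}$. On each input letter $a$: first advance all states in $S$ by $\delta_2(\cdot, a)$; then simulate the (unique) move of $M_1$ on $a$ with its current counter contents, updating $M_1$'s state and the $k$ counters exactly as $M_1$ would (so the counter count and reversal bound $l$ are inherited verbatim); finally, if the new $M_1$-state is final, add $s_2$ to the updated set $S$ (reflecting the new decomposition where the $L$-part ends right here, so the $R$-part is the remaining suffix, which so far has length $0$). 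Handle $\lambda$-moves of $M_1$ analogously but without advancing the DFA component. The accepting states of $M'$ are those $(q, S)$ with $S \cap F_2 \neq \emptyset$. Since $M_1$ has $k$ counters that are $l$-reversal-bounded and $M_2$ contributes only finite control, $M'$ has $k$ counters that are $l$-reversal-bounded, giving $LR \in \DCMNE(k,l)$.

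For correctness, one direction: if $w = ux$ with $u \in L$ and $x \in R$, then after consuming $u$ the $M_1$-component is in a final state (no end-marker needed, by the definition of $\DCMNE$), so $s_2$ was inserted into $S$ at that point, and after the remaining $x$ that copy of the DFA has advanced to $\delta_2^*(s_2,x) \in F_2$; the DFA component only ever advances, so this state survives in $S$, and $M'$ accepts. Conversely, if $M'$ accepts $w$, then at the end $S$ contains some DFA state in $F_2$; tracing when that state entered $S$ as $s_2$, it was at the end of some prefix $u$ after which $M_1$ was final, i.e.\ $u \in L$, and the remaining suffix $x$ drove that DFA copy into $F_2$, i.e.\ $x \in R$, so $w = ux \in LR$. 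The main obstacle—and the only place care is genuinely needed—is the interaction of $M_1$'s $\lambda$-moves with the bookkeeping: one must be sure the DFA subset $S$ is \emph{not} advanced on $\lambda$-moves (no input letter is consumed) yet $s_2$ \emph{is} (re)inserted whenever a $\lambda$-move lands $M_1$ in a final state, and one must argue $M'$ does not loop forever on $\lambda$-moves, which follows because $M_1$, being a deterministic machine, cannot itself loop forever on $\lambda$-transitions in an accepting computation (or one may preprocess $M_1$ to bound $\lambda$-move runs). The last sentence, $\pref^{-1}(L) = L\Sigma^* \in \DCMNE(k,l)$, is then immediate by taking $R = \Sigma^* \in \REG$ and recalling $\pref^{-1}(L) = L\Sigma^*$ from Definition \ref{def:opGeneralize}.
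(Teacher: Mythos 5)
Your overall construction is the same as the paper's: simulate $M_1$ while carrying a subset of states of the DFA $M_2$ in the finite control, seed that subset with $M_2$'s initial state whenever the simulated $M_1$ indicates that the prefix consumed so far is in $L$, advance the whole subset on every consumed letter, and accept when the subset meets $F_2$; the counters are untouched, so the $(k,l)$ bounds carry over. The problem is that the one place where you say ``care is genuinely needed'' is exactly where your rule is wrong. Under the paper's definition of acceptance by final state without end-marker, $u \in L$ if and only if the state entered by the move that \emph{consumes the last letter of $u$} (a right move) is final; a final state entered by a stay move is irrelevant, both because the definition requires finality immediately after the last consuming step and because a stay move taken while scanning the next letter of a longer input $ux$ need not even be available on input $u\lhd$, where the machine would be scanning $\lhd$ at that point. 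So your prescription that $s_2$ \emph{is} inserted whenever a stay/$\lambda$-move lands $M_1$ in a final state creates false accepts. Concretely, take a counter-free $M_1$ with: from $q_0$ on $a$ move right to $q_1$; from $q_1$ on $b$ a stay move into the unique final state $q_f$; all other transitions go right into a dead state. Then no right move enters $q_f$, so $L(M_1)=\emptyset$ and $LR=\emptyset$; yet your $M'$ inserts $s_2$ while scanning $b$ and accepts $ab$ whenever $b \in R$. The correct handling, which is what the paper does, is to normalize $M_1$ first so that final states are entered only on right moves (redirect any stay transition into a final state to a non-final copy of that state); equivalently, seed $s_2$ only after right moves into final states, never on stay moves.

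Your second point of care, non-termination on stay moves, is a real issue but your justification does not address it: the dangerous stalling (an infinite stay loop, or simply an undefined transition, since $\delta$ is partial) occurs while $M_1$ processes the suffix $x$ \emph{after} a prefix $u \in L$, and that portion of the run is not part of any accepting computation of $M_1$, so determinism of $M_1$ on accepting computations gives you nothing; if $M_1$ stalls there, $M'$ wrongly rejects $ux \in LR$. The paper dispatches this with a second explicit normalization: assume $M_1$ reads every input to the end (as in the closure of $\DCM$ under complement, detect stay loops in the finite control and switch to a dead state that scans the remaining input). Your parenthetical about preprocessing gestures at this, but it needs to be stated as an assumption on $M_1$, not argued from determinism. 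With these two normalizations in place, your construction and correctness argument coincide with the paper's proof.
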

\begin{proof}
Let $M_1=(k,Q_1,\Sigma,\lhd,\delta_1,q_1,F_1)$ be a $\DCM$ machine accepting $L$ by final state without end-marker where, without loss of generality, final
states are only reached after transitions that move right. Let $M_2=(Q_2,\Sigma,\delta_2,q_2,F_2)$ be a DFA accepting $R$. A $\DCM$ machine $M' =(k,Q',\Sigma,\lhd,\delta',q',F')$ will be built that will accept $LR$ by
final state without end-marker. Assume without loss of generality that $M_1$ reads all the way to the end of every input.
This can be assumed, similar to the proof of closure of $\DCM$ under complement \cite{Ibarra1978} by removing the ability of $M_1$ to
enter an infinite loop on the input which can be detected using the finite control, and instead switching to a ``dead state'' while
reading all of the input.

Intuitively, $M'$ will simulate $M_1$ while also storing a subset of $Q_2$ in the second component of the state. Every time it reaches a final state of $M_1$, it places the initial state of $M_2$ in the second component. And, then it continues to simulate $M_1$, while in parallel simulating the DFA $M_2$ on every state in the second component in parallel. 

Formally, $Q' = Q_1 \times 2^{Q_2}$, $q' = (q_1,\emptyset)$ if $q_1 \notin F_1$ and $q' = (q_1,\{q_2\})$ otherwise, $F' = \{(q,X) \mid q\in Q_1, X \cap F_2\neq \emptyset\}$ and $\delta'$ is defined as follows:
for every transition, $\delta_1(q,a,x) = (p,T,i), p,q\in Q_1, a \in \Sigma, x \in \{0,1\}^k, T \in \{{\rm S},{\rm R}\}, i \in \{-1,0,1\}^k$, introduce
$\delta'((q,Y),a,x) = ((p,Z),T,i)$, for all $Y \in 2^{Q_2}$, where
\begin{itemize}
\item $Z = Y  \mbox{~if~} T = {\rm S} \mbox{~(and hence~} p \notin F_1$)
\item $Z = \delta_2(Y,a) \mbox{~if~} T = {\rm R} \mbox{~and~} p \notin F_1$
\item $Z = \delta_2(Y,a) \cup \{q_2\} \mbox{~if~} T = {\rm R} \mbox{~and~} p \in F_1$
\end{itemize}

\begin{claim}
$L(M_1)L(M_2) \subseteq L(M')$.
\end{claim}
\begin{proof}
Let $uv \in \Sigma^*$, where $u \in L(M_1), v \in L(M_2)$. Then there is a computation
$(p_1,u_1, i_1(1), \ldots, i_k(1)) \vdash_M \cdots \vdash_M (p_n, u_n, i_1(n), \ldots , i_k(n))$
where $p_1 = q_1, u_1 = uv, i_1(1)= \cdots = i_k(1) = 0, p_n \in F_1, u_n = v$. Furthermore, since $M_1$ reads every input, $(p_n,u_n, i_1(n), \ldots, i_k(n)) \vdash_{M_1}^* (p', \lambda, i_1, \ldots, i_k)$, for some $p', i_1, \ldots, i_k$.
Then, by the construction, $((p_1,Y_1), uv, 0, \ldots , 0) \vdash_{M'}^* ((p_n,Y_n),v ,i_1(n), \ldots, i_k(n))$, where $Y_1 = \emptyset$ and $q_2 \in Y_n$ since $p_n \in F_1$. Furthermore, it must be the case that
$((p_n,Y_n), v, i_1(n), \ldots, i_k(n)) \vdash_{M'}^* ((p',Y'),\lambda, i_1, \ldots, i_k)$, and that $\hat{\delta}(q_2,v) \in Y'$ since every transition applied to $M_1$ while reading $v$ that consumes an input letter, also changes state via that letter according to the DFA $M_2$. Thus, there is a final state from $F_2$ in $Y'$ causing $M'$ to also accept.
\qed \end{proof}

\begin{claim}
$L(M')\subseteq L(M_1)L(M_2)$
\end{claim}
\begin{proof}
Let $w \in L(M')$. Then,
$$((p_1,Y_1),u_1,i_1(1), \ldots, i_k(1)) \vdash_{M'} \cdots \vdash_{M'} ((p_n,Y_n), u_n, i_1(n), \ldots , i_k(n)),$$ where $u_1 = w, p_1 = q_1, Y_1 = \emptyset, u_n = \lambda, i_1(1)= \cdots = i_k(1)=0, Y_n \cap F_2 \neq \emptyset$. Let $q_f$ be some state in $F_2 \cap Y_n$.
Then, by the construction, there exists some $j$, $1 \leq j \leq n$ such that $p_j \in F_1, q_2 \in Y_j$,
and for every transition from the $j$th configuration to the last one, while reading $u_j$,
the sets $Y_j, \ldots , Y_n$ consecutively stay the same on a stay transition, and on a right transition
 that consumes the next input letter of $u_j$, puts the state $\hat{\delta_2}(q_2,u_j')$, for each consecutive prefix $u_j'$ of $u_j$ in the sets $Y_j, \ldots, Y_n$. Hence, $u_j \in R$, and since $p_j \in F_1$, it must be that $w u_j^{-1}\in L(M_1)$.
\qed \end{proof}
Hence, $LR \in \DCMNE(k,l)$.
\qed \end{proof}

As a corollary, we get that $\DCM(1,l)$ is closed under right concatenation with regular languages. This corollary could also be inferred from the proof  that deterministic context-free languages are closed under concatenation with regular languages \cite{harrison1978}.
\begin{corollary}
\label{oneCounterConcat}
Let $L \in \DCM(1,l)$ and $R \in \REG$. Then $LR \in \DCM(1,l)$.
\end{corollary}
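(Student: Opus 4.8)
The plan is to derive Corollary~\ref{oneCounterConcat} directly from Theorem~\ref{DCMNEwithREG} together with Lemma~\ref{lem:noEndMarker}. The key observation is that both the hypothesis $L \in \DCM(1,l)$ and the desired conclusion $LR \in \DCM(1,l)$ concern one-counter machines with an end-marker, whereas Theorem~\ref{DCMNEwithREG} is phrased in terms of $\DCMNE$; Lemma~\ref{lem:noEndMarker} is exactly the bridge that identifies $\DCM(1,l)$ with $\DCMNE(1,l)$ for every $l \geq 1$, so no new machine construction is needed.

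Concretely, first I would take $L \in \DCM(1,l)$. If $l = 0$ the counter is effectively unusable as a counter that reverses, but we may treat it as a $\DCM(1,1)$ language (a machine that never reverses is in particular $1$-reversal-bounded), so without loss of generality $l \geq 1$. By Lemma~\ref{lem:noEndMarker}, $\DCM(1,l) = \DCMNE(1,l)$, hence $L \in \DCMNE(1,l)$. Second, apply Theorem~\ref{DCMNEwithREG} with $k = 1$: since $R \in \REG$, we obtain $LR \in \DCMNE(1,l)$. Third, invoke Lemma~\ref{lem:noEndMarker} once more in the other direction to conclude $LR \in \DCMNE(1,l) = \DCM(1,l)$, which is the claim.

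There is essentially no obstacle here: the corollary is a two-line consequence of the two preceding results, and the only point requiring a word of care is the boundary case $l = 0$ (handled by the remark above that a non-reversing counter is $1$-reversal-bounded, so $\DCM(1,0) \subseteq \DCM(1,1)$ and the conclusion for $l=1$ subsumes it). The alternative route mentioned in the statement — extracting the result from Harrison's proof that deterministic context-free languages are closed under right concatenation with regular languages \cite{harrison1978} — would require re-checking that the transformation there respects the one-counter and reversal-bound restrictions, which is strictly more work than the $\DCMNE$ argument, so I would present only the short proof and mention the DCFL route as a remark.
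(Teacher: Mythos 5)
Your proof is correct and is exactly the route the paper intends: the corollary is stated immediately after Theorem~\ref{DCMNEwithREG}, with Lemma~\ref{lem:noEndMarker} supplying the identification $\DCM(1,l)=\DCMNE(1,l)$ in both directions, so no new construction is needed. Your side remark on $l=0$ only yields $LR\in\DCM(1,1)$ rather than $\DCM(1,0)$, but this is immaterial since the paper (see Table~\ref{tab:summary}) asserts the result only for $l\geq 1$.
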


\begin{corollary}
\label{dcminverseprefix}
If $L \in \DCM(1,l)$, then $\pref^{-1}(L) \in \DCM(1,l)$.
\end{corollary}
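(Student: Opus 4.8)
The plan is to reduce this immediately to the concatenation results already in hand, since all the real work has been done. By Definition~\ref{def:opGeneralize}, $\pref^{-1}(L) = L\Sigma^*$, and $\Sigma^*$ is a regular language. Hence the statement is nothing more than the special case $R = \Sigma^*$ of Corollary~\ref{oneCounterConcat}: if $L \in \DCM(1,l)$ then $L\Sigma^* \in \DCM(1,l)$. So the proof is a single line invoking that corollary.

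If one instead wants to see the underlying mechanism (which Corollary~\ref{oneCounterConcat} packages), I would chain the two earlier results directly. For $l \geq 1$, apply Lemma~\ref{lem:noEndMarker} to move from $L \in \DCM(1,l)$ to $L \in \DCMNE(1,l)$; then apply Theorem~\ref{DCMNEwithREG} with $R = \Sigma^*$ to obtain $\pref^{-1}(L) = L\Sigma^* \in \DCMNE(1,l)$ (this is in fact the explicit "Hence" clause of that theorem); then apply Lemma~\ref{lem:noEndMarker} once more in the reverse direction to conclude $\pref^{-1}(L) \in \DCM(1,l)$. Nothing new needs to be constructed.

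There is essentially no obstacle here; the genuine content lives in the two cited results. The point worth emphasising is why the one-counter hypothesis matters: Theorem~\ref{DCMNEwithREG} only yields closure for the end-marker-free family $\DCMNE$, and it is precisely Lemma~\ref{lem:noEndMarker} — which holds only for a single reversal-bounded counter — that lets us pass freely between $\DCM(1,l)$ and $\DCMNE(1,l)$. For $k \geq 2$ this bridge fails (indeed Theorem~\ref{nonclosure2} shows $\pref^{-1}$ does not preserve $\DCM(k,l)$ there), so the restriction $k=1$ is not merely a convenience but is what makes the corollary true.
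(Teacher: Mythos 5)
Your proposal is correct and matches the paper's own route: the paper derives Corollary~\ref{dcminverseprefix} exactly as the special case $R=\Sigma^*$ of Corollary~\ref{oneCounterConcat}, which in turn rests on Lemma~\ref{lem:noEndMarker} and Theorem~\ref{DCMNEwithREG} as you describe. Your closing remark on why $k=1$ is essential (the end-marker elimination of Lemma~\ref{lem:noEndMarker} failing for $k\geq 2$, cf.\ Theorem~\ref{nonclosure2}) is also in line with the paper's discussion.
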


\section{Relating (Un)Decidable Properties to Non-closure Properties}
\label{sec:nonclosure}

In this section, we use a technique that proves non-closure properties
using (un)decidable properties. 
 A similar technique was used in \cite{Chiniforooshan2012} for showing that there is a language accepted by
a 1-reversal DPDA that cannot be accepted by any $\NCM$.
In particular, we use this technique
to prove that some languages are not in both $\DCM$ and $2\DCM(1)$ (i.e., accepted by two-way
DFAs with one reversal-bounded counter). Since $2\DCM(1)$s
have two-way input and a reversal-bounded counter, it does not seem
easy to derive ``pumping'' lemmas for these machines.
$2\DCM(1)$s are quite powerful, e.g.,
although the Parikh map of the language accepted by
any finite-crossing $2\NCM$ (hence by  any $\NCM$) is semilinear
\cite{Ibarra1978},
$2\DCM(1)$s can accept non-semilinear languages.  For example,
$L_1= \{a^i b^k ~|~ i, k \ge 2, i$ divides $k \}$ can
be accepted by a $2\DCM(1)$ whose counter makes only one reversal.
This technique is used to establish that the inverse infix,
inverse suffix, and inverse outfix closure
of a language in $\DCM(1,1)$ can be outside of both $\DCM$ and $2\DCM(1)$. It is also used to show that the
inverse prefix closure of a $\DCM(2,1)$ language can be
outside of both $\DCM$ and $2\DCM(1)$.

We will need the following result
(the proof for DCM is in \cite{Ibarra1978};
the proof for 2DCM(1) is in \cite{IbarraJiang}):

\begin{proposition} \label{thm1}
$~~$
\begin{enumerate}
\item
The class of languages $\DCM$ is
closed under Boolean operations.  Moreover, the emptiness
problem is decidable.
\item
The class of languages $2\DCM(1)$ is
closed under Boolean operations.  Moreover, the emptiness
problem is decidable.
\end{enumerate}
\end{proposition}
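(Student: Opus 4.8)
The plan is to prove the two parts of Proposition~\ref{thm1} by citing the referenced literature and indicating the structure of each argument, since the statement bundles together standard results.

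\textbf{Plan for part (1).} First I would recall that $\DCM$ languages are effectively semilinear and that $\DCM$ is closed under intersection with regular languages, union, and (crucially) complement. Closure under complement is the subtle one: given a deterministic reversal-bounded machine $M$, one makes $\delta$ total and, using the finite control, detects when $M$ would enter an infinite loop without consuming input (possible because the counters are reversal-bounded, so only finitely many distinct ``loop signatures'' can occur); at that point $M$ is forced into a dead state while finishing the input. Once $M$ reads every input to $\lhd$, complementation is just swapping final and non-final states at the step where $\lhd$ is reached. This is exactly the argument of \cite{Ibarra1978}, and it is the same device invoked in the proof of Theorem~\ref{DCMNEwithREG}. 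Closure under union and intersection follows by a standard product construction running the two machines in parallel with the union of their counter sets (the reversal bounds add, but stay finite), together with De~Morgan. For decidability of emptiness, I would cite that any $\NCM$ (hence $\DCM$) language has an effectively computable semilinear Parikh image \cite{Ibarra1978}, so emptiness reduces to testing whether a semilinear set is empty; alternatively one reachability-analyses the machine directly. Hence part (1).

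\textbf{Plan for part (2).} Here I would simply cite \cite{IbarraJiang} for the fact that $2\DCM(1)$ is closed under Boolean operations with decidable emptiness. The key points to mention: a two-way deterministic machine with a single reversal-bounded counter can be made to halt on every input (the number of distinct ``crossing configurations'' at each boundary, combined with the counter being reversal-bounded, bounds the length of any non-repeating computation, so looping is detectable), which gives closure under complement by the usual final/non-final swap; closure under union and intersection is obtained by a sequential simulation — run the first machine to completion on the (read-only) two-way input, then run the second — which preserves both two-wayness and the single-counter, reversal-bounded restriction (the two counter usages are separated in time, so one reversal-bounded counter suffices after a reset); and decidability of emptiness for $2\DCM(1)$ is the main nontrivial input, established in \cite{Gurari1981220,IbarraJiang} via the finite-crossing analysis.

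\textbf{Main obstacle.} The genuinely hard content is entirely external to this paper: it is the decidability of emptiness for $2\DCM(1)$ and the halting/complementation argument for two-way reversal-bounded counter machines, both of which are cited rather than reproved. Within the paper's own development there is nothing to grind through — the proposition is a packaging of known results — so the ``proof'' is essentially a pointer to \cite{Ibarra1978} for part (1) and to \cite{IbarraJiang} (and \cite{Gurari1981220}) for part (2), with the above sketch of why each closure property holds.
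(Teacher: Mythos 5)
Your proposal matches the paper's treatment: the paper offers no proof of Proposition~\ref{thm1} beyond pointing to \cite{Ibarra1978} for the $\DCM$ part and \cite{IbarraJiang} for the $2\DCM(1)$ part, which is exactly what you do (your added sketches of the complementation and product/sequential constructions are consistent with those sources and introduce nothing that conflicts with the paper). So this is correct and essentially the same approach as the paper.
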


\noindent
We note that the emptiness problem for $2\DCM(2)$,
even when restricted to machines accepting only
letter-bounded languages (i.e., subsets of
$a_1^* \cdots a_k^*$ for some $k \ge 1$ and
distinct symbols $a_1, \ldots, a_k$)
is undecidable \cite{Ibarra1978}.

We will show that there is a language $L \in \DCM(1,1)$
such that $\infx^{-1}(L)$ is not in $\DCM \cup 2\DCM(1)$.

The proof uses the fact that that there is a recursively enumerable
language $L_{\rm re} \subseteq \natzero$ that is not recursive
(i.e., not decidable) which is accepted by a deterministic
2-counter machine \cite{Minsky}.  Thus, the machine
when started with $n \in \natzero$ in the first counter
and zero in the second counter,
eventually halts (i.e., accepts $n \in L_{\rm re}$).

A close look at the constructions  in \cite{Minsky}
of the 2-counter machine, where initially one counter has some
value $d_1$ and the other counter is zero, reveals
that the counters behave in a regular pattern. The 2-counter machine
operates in phases in the following way.
The machine's operation can be divided into phases, where each
phase starts with one of the counters equal to
some positive integer $d_i$ and the other counter equal to 0.
During the phase, the positive counter decreases, while the other
counter increases. The phase ends with the first counter having
value 0 and the other counter having value $d_{i+1}$.
Then in the next phase the modes of the counters are interchanged.
Thus, a sequence of configurations corresponding to the phases
will be of the form:

\vskip .25cm

$(q_1, d_1, 0), (q_2, 0, d_2), (q_3, d_3, 0), (q_4, 0, d_4), (q_5, d_5, 0), (q_6, 0, d_6), \dots$

\vskip .25cm

\noindent
where the $q_i$'s  are states, with $q_1 = q_s$ (the
initial state), and $d_1, d_2, d_3, \ldots$ are positive
integers. Note that in going from state $q_i$ in phase $i$ to
state $q_{i+1}$ in phase $i+1$, the 2-counter machine goes
through intermediate states.
Note that the second component of
the configuration refers to the value of $c_1$ (first counter), while the third
component refers to the value of $c_2$ (second counter).

For each $i$, there are 5 cases for the value of $d_{i+1}$
in terms of $d_i$:
$d_{i+1} = d_i, 2d_i, 3d_i, d_i/2, d_i/3$.
(The division operation is done only if the number is divisible
by 2 or 3, respectively.)
The case is determined by $q_i$.
Thus, we can define a mapping $h$ such if $q_i$
is the state at the start of phase $i$,
$d_{i+1} = h(q_i)d_i$ (where $h(q_i)$ is either
1, 2, 3, 1/2, 1/3).

Let $T$ be a 2-counter machine accepting
a recursively enumerable set $L_{\rm re}$ that is not recursive.  We assume
that $q_1=q_s$ is the initial state, which is never re-entered,
and  if $T$ halts, it does so in a unique state $q_h$.
Let $T$'s state set be $Q$, and $1$ be a new symbol.

In what follows, $\alpha$ is any sequence of
the form $\# I_1 \# I_2 \# \cdots \# I_{2m} \#$ (thus we assume that
the length is even), where
$I_i = q1^k$ for some $q \in Q$ and $k \ge 1$, represents
a possible configuration of $T$ at
the beginning of phase $i$, where $q$ is the state and
$k$ is the value of counter $c_1$ (resp., $c_2$) if $i$
is odd (resp., even).

\vskip .25cm

\noindent
Define $L_0$ to be the set of all strings $\alpha$ such that
\begin{enumerate}
\item $\alpha = \#I_1 \#I_2\# \cdots \#I_{2m}\#$;
\item $m \ge 1$;
\item for $1 \le j \le 2m-1$, $I_j \Rightarrow I_{j+1}$, i.e.,
if $T$ begins in configuration $I_j$, then after one phase,
$T$ is in configuration $I_{j+1}$ (i.e., $I_{j+1}$
is a valid successor of $I_j$);
\end{enumerate}

\begin{lemma} \label{lem1}
$L_0$ is not in $\DCM \cup 2\DCM(1)$.
\end{lemma}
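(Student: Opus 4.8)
The plan is to show that if $L_0$ were in $\DCM \cup 2\DCM(1)$, then one could decide membership in the non-recursive set $L_{\mathrm{re}}$, a contradiction. The key observation is that a string $\alpha = \#I_1\#I_2\#\cdots\#I_{2m}\#$ lying in $L_0$ encodes a genuine sequence of phase-configurations of the $2$-counter machine $T$; if additionally $I_1 = q_s 1^n$ and $I_{2m}$ contains the halting state $q_h$, then $\alpha$ witnesses that $T$ halts on input $n$, i.e.\ $n \in L_{\mathrm{re}}$. So the hard part is to recover $L_0$, or a closely related language, from assumed-simple machines, and to turn the ``valid successor'' relation between consecutive $I_j$'s into something checkable by Boolean combinations of $\DCM$ (resp.\ $2\DCM(1)$) languages together with a decidable emptiness test (Proposition~\ref{thm1}).

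First I would make precise how $L_0$ encodes halting computations. Fix the new symbol $1$; define the regular language $R$ of all strings of the form $\#q_s1^+\#(Q1^+\#)^*q_h1^+\#$ (first block has the initial state, last block has the halting state, and the block count is even, which is itself a regular condition). Then $n \in L_{\mathrm{re}}$ iff there is a string in $L_0 \cap R$ whose first block is exactly $\#q_s1^n\#$; equivalently, $n \in L_{\mathrm{re}}$ iff the language $L_0 \cap R \cap (\#q_s1^n\#\Sigma^*)$ is nonempty. Since $\{\#q_s1^n\#\Sigma^*\}$ is regular for each fixed $n$, and $R$ is regular, if $L_0$ were in $\DCM$, then by closure under Boolean operations (intersection, since regular languages are in $\DCM(0,0)$) this intersection would be in $\DCM$, and its emptiness would be decidable — so $L_{\mathrm{re}}$ would be recursive, a contradiction. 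The identical argument runs with $2\DCM(1)$ in place of $\DCM$, again using Proposition~\ref{thm1}. Hence $L_0 \notin \DCM$ and $L_0 \notin 2\DCM(1)$.

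The step I expect to be the main obstacle — and the reason one must be careful — is that the reasoning above silently assumes that $L_0$ \emph{really does} capture the successor relation $I_j \Rightarrow I_{j+1}$ faithfully, i.e.\ that the definition of $L_0$ is strong enough that membership forces each transition to be legitimate. This is exactly condition~(3) in the definition, so it is built in; the subtlety is instead whether the author intends $L_0$ to already be known to be recursive/recognizable by something — but we do not need that. We only need: (a) every halting computation of $T$ on input $n$ gives a string in $L_0 \cap R$ with the right first block (completeness of the encoding — here one uses the phase structure of $T$, that $d_{i+1}=h(q_i)d_i$, and that $q_s$ is never re-entered so the block count being even is consistent); and (b) conversely every string in $L_0$ forces valid phase transitions, so a string in $L_0 \cap R$ with first block $\#q_s1^n\#$ genuinely certifies $n \in L_{\mathrm{re}}$ (soundness). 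Both (a) and (b) are immediate from the definition of $L_0$ and the description of $T$'s phases given just above the lemma; no pumping or structural analysis of $\DCM$ / $2\DCM(1)$ machines is needed beyond the two black-box facts from Proposition~\ref{thm1} (closure under Boolean operations with regular languages, and decidable emptiness).

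In summary: assume $L_0 \in \DCM$ (the $2\DCM(1)$ case is verbatim). For each $n$, intersect $L_0$ with the regular set of encodings whose first block is $\#q_s1^n\#$ and whose last block names $q_h$; this stays in $\DCM$, and testing it for emptiness decides $n \in L_{\mathrm{re}}$, contradicting non-recursiveness of $L_{\mathrm{re}}$. Therefore $L_0 \notin \DCM \cup 2\DCM(1)$.
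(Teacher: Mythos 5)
Your proposal is correct and follows essentially the same route as the paper's proof: assuming $L_0 \in \DCM$ (resp.\ $2\DCM(1)$), intersect it with a regular language pinning the first block to $\#q_s1^n$ and the last block to a $q_h$-configuration, then invoke Proposition~\ref{thm1} (closure under intersection with regular sets and decidable emptiness) to decide $n \in L_{\rm re}$, contradicting its non-recursiveness. The only differences are presentational (the paper writes the regular set explicitly as $\#q_s1^n((\#Q1^+\#Q1^+))^*\#q_h1^+\#$), so no substantive gap remains.
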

\begin{proof}
Suppose $L_0$ is accepted by a $\DCM$ (resp., $2\DCM(1)$).
The following is an algorithm to decide, given
any $n$, whether $n$ is in $L_{\rm re}$.

\begin{enumerate}
\item
Let $R = \#q_s1^n ((\#Q1^+\#Q1^+))^*\#q_h1^+\#$.
Then $R$ is regular.
\item
Then $L' = L_0 \cap R$ is also in $\DCM$ (resp., $2\DCM(1)$)
by Proposition \ref{thm1}.
\item
Check if $L'$ is empty. This is possible, since
emptiness of $\DCM$ (respectively, $2\DCM(1)$) is decidable by Proposition \ref{thm1}.
\end{enumerate}

\vskip .25cm

\noindent
The claim follows, since $L'$ is empty if and only if
$n$ is not in $L_{\rm re}$.
\qed \end{proof}

\subsection{Non-closure Under Inverse Infix}

\begin{theorem}
\label{prop:inverseNotInDCM}
There is a language $L \in \DCM(1,1)$
such that $\infx^{-1}(L) = \Sigma^* L \Sigma^*$ is not in
$\DCM \cup 2\DCM(1)$.
\end{theorem}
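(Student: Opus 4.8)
The plan is to encode a single (invalid) phase transition of the $2$-counter machine $T$ as a $\DCM(1,1)$ language $L$, and then recover the language $L_0$ of Lemma~\ref{lem1} from $\infx^{-1}(L)=\Sigma^*L\Sigma^*$ by a Boolean operation with a regular language; since $\DCM$ and $2\DCM(1)$ are both closed under Boolean operations (Proposition~\ref{thm1}) and contain all regular languages, getting $\infx^{-1}(L)\in\DCM$ (resp.\ $2\DCM(1)$) would force $L_0$ into that class, contradicting Lemma~\ref{lem1}.

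Concretely, let $Q$ be the state set of $T$, let $g:Q\to Q$ be the map sending a phase‑start state to the next phase‑start state, and let $h:Q\to\{1,2,3,\tfrac12,\tfrac13\}$ be the multiplier map from the setup, so that the unique valid successor of a configuration $q1^a$ (when $q$ has a successor phase) is $g(q)1^{b^\ast}$, where $b^\ast=h(q)\,a$ — with the side condition $c\mid a$ and $b^\ast=a/c$ in the cases $h(q)=\tfrac1c$. Define
\[
L=\{\,\#q1^a\#q'1^b\#\ \mid\ q,q'\in Q,\ a,b\ge 1,\ q'1^b\text{ is \emph{not} the valid successor of }q1^a\,\}.
\]
I would show $L\in\DCM(1,1)$ by a direct construction: the machine reads $\#q$, stores $q$, and while scanning $1^a$ it loads the target length $b^\ast$ onto its single counter (adding $h(q)$ per symbol, or adding $1$ every $c$-th symbol while tracking $a\bmod c$ in the finite control when $h(q)=\tfrac1c$). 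If $q$ has no successor phase, or $h(q)=\tfrac1c$ with $c\nmid a$, or (after reading $q'$) $q'\neq g(q)$, the pair is already known to be invalid and the machine passes to an accepting sink that merely verifies the rest of the input is well formed. Otherwise it scans $1^b$ decrementing the counter (its one and only reversal) and accepts iff the counter empties before $1^b$ is exhausted (so $b>b^\ast$) or is still positive when the final $\#$ is read (so $b<b^\ast$), and rejects iff the counter hits $0$ exactly at that $\#$ (so $b=b^\ast$, the valid case). Anything not of the exact shape $\#Q1^+\#Q1^+\#$ is rejected in the finite control, so also $L\subseteq\#Q1^+\#Q1^+\#$. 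The hard part of the whole argument is precisely this construction: ``not the valid successor'' is an inherent disjunction ($b$ too large, $b$ too small, wrong state, or a divisibility failure), yet it must be decided deterministically with one counter making one reversal; the point that rescues it is that, once $b^\ast$ sits on the counter, $b>b^\ast$ and $b<b^\ast$ are separated simply by whether the counter empties before or after the block $1^b$ runs out, so no guessing and no second reversal is needed.

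Finally I would recover $L_0$. Let $R_{\mathrm{wf}}$ be the (regular) set of strings $\#I_1\#I_2\#\cdots\#I_{2m}\#$ with $m\ge1$ and every $I_i\in Q1^+$ — exactly conditions~1 and~2 in the definition of $L_0$. For such a string, the only occurrences of $\#$ are the block separators, so its only infixes lying in $\#Q1^+\#Q1^+\#$ are the ``aligned'' windows $\#I_j\#I_{j+1}\#$ for $1\le j\le 2m-1$; and since $L\subseteq\#Q1^+\#Q1^+\#$, a well‑formed string lies in $\infx^{-1}(L)$ iff some $\#I_j\#I_{j+1}\#\in L$, i.e.\ iff some consecutive pair fails $I_j\Rightarrow I_{j+1}$. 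Comparing with condition~3 of $L_0$ gives
\[
L_0=R_{\mathrm{wf}}\setminus\infx^{-1}(L)=R_{\mathrm{wf}}\cap\overline{\infx^{-1}(L)}.
\]
Hence, were $\infx^{-1}(L)$ in $\DCM$ (resp.\ $2\DCM(1)$), then by closure under complement and under intersection with regular languages (Proposition~\ref{thm1}) so would be $L_0$, contradicting Lemma~\ref{lem1}. Therefore $\infx^{-1}(L)\notin\DCM\cup 2\DCM(1)$, as claimed.
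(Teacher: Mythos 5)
Your reduction is the same as the paper's: you take $L$ to be the set of encoded pairs $\#q1^a\#q'1^b\#$ whose second configuration is not the valid one-phase successor of the first, and you recover $L_0$ as $R_{\mathrm{wf}}\cap\overline{\infx^{-1}(L)}$ (the paper writes this as $\overline{L_1}\cap(\#Q1^+\#Q1^+)^+\#=L_0$), then conclude via Proposition~\ref{thm1} and Lemma~\ref{lem1}. That part of the argument, including the observation that the only infixes of a well-formed string lying in $\#Q1^+\#Q1^+\#$ are the aligned windows $\#I_j\#I_{j+1}\#$, is correct.

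The gap is in your $\DCM(1,1)$ construction, which rests on two assumptions about the $2$-counter machine $T$ that the paper does not make and that fail for Minsky's construction: (i) that the phase-ending state is a function $g(q)$ of the phase-start state alone, and (ii) that when $h(q)=1/c$ and $c\nmid a$ the configuration $q1^a$ has no valid successor. The whole point of the divide-by-$2$ and divide-by-$3$ phases in Minsky's simulation is to \emph{test} divisibility, so the state in which a phase ends depends on the remainder of $d_i$ and not only on $q_i$, and a failed division still leads to a well-defined successor configuration; this is exactly why, in the proof of Theorem~\ref{nonclosure2}, the paper needs a DFA $B$ that simulates $T$ on $1^{d_i}$ just to determine the phase-ending state, which would be pointless if your $g$ existed. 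As written, your machine therefore accepts a language different from the $L$ you defined (for instance, it accepts the \emph{true} successor pairs arising from failed divisions, or from ending states that depend on $a$), and then the identity $L_0=R_{\mathrm{wf}}\cap\overline{\infx^{-1}(L)}$ breaks on genuine computations of $T$ containing such phases. The repair is the paper's own construction, which is also simpler than loading $b^\ast=h(q)a$: simulate the phase of $T$ directly, using the input block $1^a$ as the decreasing counter and the machine's single counter as the increasing counter (whose zero-status during a phase is known in the finite control); after the block the machine knows the true ending state and holds the true ending value, and it accepts exactly when $\#q'1^b\#$ fails to match, using one counter reversal.
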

\begin{proof}
Let $T$ be a 2-counter machine.
Let $L = \{\#q1^m \#p1^n\# \mid \mbox{in~} T, q1^m\not\Rightarrow p1^n \}$.
That is, $L$ contains all pairs of configurations of $T$ where,
when starting
in state $q$ with $m$ on one counter and zero on the other,
at the next phase, $T$ does not reach state $p$
with the first counter empty, and $n$ in the second counter.
Thus, $L = \{\# I \# I' \# ~|~ I$ and $I'$ are configurations
of $T$, and $I'$ is not a valid successor of $I \}$.
Since $T$ is a deterministic counter machine that, within one phase,
only decreases one counter while increasing another,
$L \in \DCM(1,1)$ since the input tape of the $\DCM(1,1)$ machine
can be used to simulate the decreasing counter (by reading the first configuration) while using the counter to simulate the increasing counter, then verifying that the configuration reached does not match the second input configuration.

We claim that $L_1 = \infx^{-1}(L)$ is not
in $\DCM \cup 2\DCM(1)$.
Otherwise, by Proposition \ref{thm1}, $\overline{L_1}$ (the complement of $L_1$)
is also in $\DCM \cup 2\DCM(1)$, and
$\overline{L_1} \cap (\#Q1^+\#Q1^+)^+\# = L_0$ would be in
$\DCM \cup 2\DCM(1)$.
This contradicts Lemma \ref{lem1}.
\qed \end{proof}

%


\subsection{Non-closure Under Inverse Prefix}

\begin{theorem} \label{nonclosure2}
There exists a language $L$ such that $L  \in \DCM(2,1)$ and
$L \in 2\DCM(1)$ (accepted by a two-way machine that makes one turn on the input tape and the counter is
1-reversal-bounded) such that
$\pref^{-1}(L) = L \Sigma^* \not \in \DCM \cup 2\DCM(1)$.
\end{theorem}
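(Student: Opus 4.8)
The plan is to mimic the construction used for the inverse infix case (Theorem~\ref{prop:inverseNotInDCM}), but now encoding an \emph{entire halting computation} of the 2-counter machine $T$ into a single string, using the fact that concatenating $\Sigma^*$ on the right lets an adversary append garbage that, combined with the complement and intersection with a regular language, recovers the language $L_0$ of Lemma~\ref{lem1}. Concretely, I would take $T$ to be the deterministic 2-counter machine accepting a recursively enumerable non-recursive set $L_{\rm re}$, and set $L$ to be the set of strings $\alpha = \#I_1\#I_2\#\cdots\#I_{2m}\#$ such that for \emph{some} odd-indexed $j$ (or all but one consecutive pair), the transition $I_j \Rightarrow I_{j+1}$ \emph{fails} --- i.e. $L$ detects a ``first error'' in a claimed phase-by-phase computation. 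The point is that a deterministic one-way machine reading left to right can, for a \emph{fixed} pair of adjacent configurations $I_j, I_{j+1}$, use its input tape to play out the decreasing counter and a single reversal-bounded counter to play out the increasing counter, and check whether $I_{j+1}$ is the valid successor of $I_j$; the subtlety is getting this to work while scanning the whole string, which is why we need $2$ counters and $1$ reversal rather than $1$ counter as in the infix case.

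The key steps, in order, are: (1)~Define $L$ precisely so that $L \in \DCM(2,1)$ --- the natural choice is ``the \emph{last} adjacent pair $I_{2m-1}\#I_{2m}$ is \emph{not} a valid phase transition, while all earlier pairs are valid,'' or alternatively ``there exists an invalid adjacent pair''; one of these two formulations will be checkable by a one-way $\DCM(2,1)$ machine (one counter to carry the running error-free check, one counter for the successor verification) and will also be acceptable by a one-turn $2\DCM(1)$. (2)~Verify the $\DCM(2,1)$ and one-turn $2\DCM(1)$ membership by describing the simulation: read $I_j = q1^k$, copy $k$ into a counter, then simulate one phase of $T$ (which deterministically decrements this counter while incrementing the other --- but since counters are reversal-bounded, use the second counter or the remaining input, exactly as in the proof of Theorem~\ref{prop:inverseNotInDCM}), and compare the resulting state and count against $I_{j+1}$. (3)~Form $\pref^{-1}(L) = L\Sigma^*$ and suppose it lies in $\DCM \cup 2\DCM(1)$; then by Proposition~\ref{thm1} its complement does too, and intersecting with the regular language $R = \#q_s1^+(\#Q1^+\#Q1^+)^*\#q_h1^+\#$ (strings that are syntactically well-formed, start in the initial state, end in the halting state, have even length) yields exactly $L_0$ (or a minor variant), contradicting Lemma~\ref{lem1}. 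The reduction to deciding $L_{\rm re}$ then goes through verbatim as in Lemma~\ref{lem1}.

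The main obstacle --- and the step I would spend the most care on --- is step~(1)--(2): choosing the formulation of $L$ so that it is \emph{simultaneously} in $\DCM(2,1)$ \emph{and} recognized by a one-turn $2\DCM(1)$, \emph{and} such that $L\Sigma^*$ (after complementation and regular intersection) collapses to $L_0$. The difficulty is that $L\Sigma^*$ allows arbitrary suffixes, so $L$ must be defined so that \emph{every} proper prefix of a would-be $L_0$-string fails the error-detection test while the nonsense suffix cannot accidentally create a valid-looking ``last pair''; this is exactly why one wants $L$ to pin the error to a \emph{specific} position (the last pair) rather than somewhere floating, and why one needs the extra counter compared to the infix argument. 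I expect the bookkeeping --- that a one-way deterministic machine can find ``the last $\#\#$-delimited pair'' while a right-appended $\Sigma^*$ is present, and that the complement/intersection cleanly recovers $L_0$ --- to be the technical heart of the proof, with everything after step~(3) being routine given Lemma~\ref{lem1} and Proposition~\ref{thm1}.
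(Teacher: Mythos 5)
There is a genuine gap, and it is exactly at the step you flag as the technical heart: your proposed $L$ is not in $\DCM(2,1)$ --- in fact it is not in $\DCM\cup 2\DCM(1)$ at all, so the whole construction cannot get off the ground. For the variant ``there exists an invalid adjacent pair'': if this language were in $\DCM$ (or $2\DCM(1)$), then by closure under complement and intersection with regular languages (Proposition~\ref{thm1}) the language of well-formed strings in which \emph{every} adjacent pair is a valid phase transition, i.e.\ $L_0$ itself, would be in $\DCM\cup 2\DCM(1)$, contradicting Lemma~\ref{lem1}; the very argument you plan to run in step~(3) against $L\Sigma^*$ already kills your candidate $L$. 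For the variant ``the last pair is invalid while all earlier pairs are valid,'' you must verify an unbounded number of consecutive successor checks $d_{j+1}=h(q_j)d_j$; each such check forces a load-and-drain (a reversal) of a counter, so two $1$-reversal-bounded counters can only carry a constant number of them --- there is no ``running error-free check'' that a reversal-bounded counter can maintain. (And its $\DCM$-membership would again make membership in $L_{\rm re}$ decidable, by intersecting with a regular set forcing the halting state $q_h$ into the next-to-last block.) The one-way machine in Theorem~\ref{prop:inverseNotInDCM} works precisely because it checks a \emph{single} pair; that trick does not iterate.

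The paper takes a different shape of $L$ to avoid this: $L=\{\#w\#\mid w\in\{a,b,\#\}^*,\ |w|_a\neq|w|_b\}$, a purely counting condition with no reference to $T$, which genuinely lies in $\DCM(2,1)$ and in one-turn $2\DCM(1)$. All the work then goes into the non-closure side: assuming $L\Sigma^*\in\DCM\cup 2\DCM(1)$, complementation and regular intersection yield $L''=\{\#a^{k_1}b^{k_1}\#\cdots\#a^{k_m}b^{k_m}\#\}$, from which one builds acceptors for the two ``adjacent unary blocks equal'' languages $L_1,L_2$; the key step your proposal has no counterpart for is converting these equality checks into successor checks by simulating $M_1$ (resp.\ $M_2$) with the input head advanced at rate $1,2,3,1/2,1/3$ according to $h(q_i)$, intersected with a DFA checking the state component, giving machines for $L_{odd}$ and $L_{even}$ with $L_{odd}\cap L_{even}=L_0$, contradicting Lemma~\ref{lem1}. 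In short: the error-detecting language over computation histories must live \emph{outside} $\DCM$, not be the starting $\DCM(2,1)$ language, and the bridge from a weak counting language to $L_0$ is the rate-adjusted simulation, which your outline is missing.
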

\begin{proof}
Consider $L = \{ \# w \# \sst$
$w \in \set{a,b,\#}^*, \abs{w}_a \neq \abs{w}_b \}$.
Then $L \in \DCM(2,1)$, as a machine can be built that
records the number of $a$'s and $b$'s in two counters, and then
once it hits the end-marker, subtracts both in parallel to verify that they
are different (it can also be accepted
by a $2\DCM(1)$ machine that records the number of $a$'s,
then makes a turn on the input and verifies that the number
of $b$'s is different).

Suppose to the contrary that $\pref^{-1}(L) \in \DCM \cup 2\DCM(1)$.
Then, $L' \in \DCM \cup 2\DCM(1)$,
where $L' = \pref^{-1}(L) \cap (\# \set{a,b,\#}^* \#) =$
$\{\# w_1 \cdots \# w_n \# \sst$
$ \exists i . \abs{w_1 \cdots w_i}_a \neq \abs{w_1 \cdots w_i}_b \}$.

Let $L'' = \overline{L'} \cap (\# a^* b^*)^+ \#$. It follows that $L''$
is in $\DCM$ and $2\DCM(1)$ since both are closed under complement
and intersection with regular languages \cite{Ibarra1978}.
Then $L'' \in \DCM \cup 2\DCM(1)$.
Further, $L'' = \set{\# a^{k_1} b^{k_1}\# \cdots \# a^{k_m} b^{k_m} \# \sst m > 0}$.

We will show that $L''$ is not in $\DCM \cup 2\DCM(1)$, which
will lead to a contradiction.
Define two languages:

\begin{itemize}
\item
$L_1 = \{\#1^{k_1}\#1^{k_1}\# \cdots \#1^{k_m}\#1^{k_m}\#  ~|~ m \ge 1, k_i \ge 1 \}$,
\item
$L_2 = \{\#1^{k_0}\#1^{k_1}\#1^{k_1}\#\cdots \#1^{k_{m-1}}\#1^{k_{m-1}}
\#1^{k_m}\# ~|~ m \ge 1,  k_i \ge 1 \}$.
\end{itemize}

\vskip .25cm

\noindent
Note that $L_1$ and $L_2$ are similar. In $L_1$, the odd-even
pairs of blocks 1's are the same, but in $L_2$, the even-odd pairs of 
blocks of 1's are
the same. If $M''$ accepts $L''$ in $\DCM \cup 2\DCM(1)$,
then it is possible to construct (from $M''$) $M_1$ and $M_2$ in $\DCM \cup 2\DCM(1)$ to
accept $L_1$ and $L_2$, respectively.

We now refer to the language $L_0$ that was shown not to be in
$\DCM \cup 2\DCM(1)$ in Lemma \ref{lem1}.
We will construct a DCM (resp., 2DCM(1)) to accept $L_0$,
which would be a contradiction.
Define the languages:
\begin{itemize}
\item
$L_{odd} = \{\#I_1\#I_2\# \cdots \#I_{2m} ~|~ m \ge 1,
I_1, \cdots, I_{2m}$ are configurations of the
2-counter machine $T$, for odd $i$, $I_{i+1}$ is
a valid successor of $I_i \}$.
\item
$L_{even} = \{\#I_1\#I_2\# \cdots \#I_{2m} ~|~ m \ge 1,
I_1, \cdots, I_{2m}$ are configurations of the
2-counter machine $T$, for even $i$, $I_{i+1}$ is
a valid successor of $I_i \}$.
\end{itemize}

\noindent
Then $L_0 = L_{odd} \cap L_{even}$.
Since DCM (resp., 2DCM(1)) is closed
under intersection, we need only to construct two DCMs (resp., 2DCM(1)s)
$M_{odd}$ and
$M_{even}$ accepting $L_{odd}$ and $L_{even}$, respectively.
We will only describe the construction of $M_{odd}$, the
construction of $M_{even}$ being similar.

\vskip .25cm

\noindent
{\bf Case:}  Suppose $L'' \in \DCM$:

\noindent
First consider the case of DCM.  We will construct two machines:
a DCM $A$ and a DFA $B$ such that $L(M_{odd}) = L(A) \cap L(B)$.

Let $L_A = \{\#I_1\#I_2\# \cdots \#I_{2m} ~|~ m \ge 1,
I_1, \cdots, I_{2m}$ are configurations of the
2-counter machine $T$, for odd $i$, if $I_i  = q_i1^{d_i}$,
then $d_{i+1} = h(q_i)d_i \}$.
We can construct a DCM $A$ to accept $L_{A}$
by simulating the DCM $M_1$.
For example, suppose $h(q_i) =3$. Then $A$
simulates $M_1$ but whenever $M_1$ moves its input
head one cell, $A$ moves its input head 3 cells.
If $h(q_i) = 1/2$, then when $M_1$ moves its head
2 cells, $A$ moves its input head 1 cell.
(Note that $A$ does not use the 2-counter machine $T$.)

Now Let $L_B = \{\#I_1\#I_2\# \cdots \#I_{2m} ~|~ m \ge 1,
I_1, \cdots, I_{2m}$ are configurations of the
2-counter machine, for odd $i$, if $I_i = q_i1^{d_i}$,
then $T$ in configuration $I_i$ ends phase $i$
in state $q_{i+1} \}$.  Then, a DFA $B$ can accept
$L_B$ by simulating $T$ for each odd $i$ starting
in state $q_i$ on $1^{d_i}$ {\em without} using a counter,
and checking that the phase ends in state $q_{i+1}$.
(Note that the DCM $A$ already checks the ``correctness''
of $d_{i+1}$.)

We can then construct from $A$ and $B$ a DCM $M_{odd}$
such that $L(M_{odd}) = L(A) \cap L(B)$.
In a similar way, we can construct $M_{even}$.

\vskip .25cm

\noindent
{\bf Case:}  Suppose $L'' \in 2DCM(1)$:

\noindent
The case 2DCM(1) can be shown similarly.
For this case, the machines $M_{odd}$ and $M_{even}$
are 2DCM(1)s, and machine $A$ is a 2DCM(1),
but machine $B$ is still a DFA.
\qed \end{proof}

The language $L$ in the proof above can be accepted by a $\DCM(2,1)$ machine that uses the end-marker. However, we
see next that this language $L$ cannot be accepted by any $\DCMNE$ machine.
\begin{corollary}
There are languages in $\DCM(2,1)$ that are not in $\DCMNE$.
\end{corollary}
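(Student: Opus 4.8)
The plan is to reuse the very language that drives Theorem \ref{nonclosure2}, namely $L = \{\#w\# \mid w \in \{a,b,\#\}^*, |w|_a \neq |w|_b\}$, and argue that membership in $\DCMNE$ would be incompatible with the non-closure already established. Theorem \ref{nonclosure2} gives us two facts for free: first, $L \in \DCM(2,1)$ (the two counters count $a$'s and $b$'s, then subtract in parallel at the end-marker); second, $\pref^{-1}(L) = L\Sigma^* \notin \DCM \cup 2\DCM(1)$.

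First I would assume for contradiction that $L \in \DCMNE$, say $L \in \DCMNE(k,l)$ for some $k,l \geq 0$. Then I would invoke Theorem \ref{DCMNEwithREG} with $R = \Sigma^*$ (which is regular), which tells us $\pref^{-1}(L) = L\Sigma^* \in \DCMNE(k,l)$. Since $\DCMNE(k,l) \subseteq \DCM(k,l) \subseteq \DCM$ (by removing transitions on the end-marker, as noted in Lemma \ref{lem:noEndMarker}), this puts $\pref^{-1}(L)$ in $\DCM$, directly contradicting the conclusion of Theorem \ref{nonclosure2}. Hence $L \notin \DCMNE$, while $L \in \DCM(2,1)$, proving the corollary.

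There is essentially no obstacle here: the corollary is a packaging of Theorems \ref{nonclosure2} and \ref{DCMNEwithREG}. The only point requiring a word of care is making explicit that $\DCMNE \subseteq \DCM$ so that the non-membership in $\DCM$ transfers to non-membership in $\DCMNE$; this is immediate since a machine accepting by final state without end-marker is in particular a $\DCM$ machine (its end-marker transitions being irrelevant). One could also remark that the same argument shows $L$ is witness to the strictness $\DCMNE(2,l) \subsetneq \DCM(2,l)$, underscoring the role of the right end-marker, but that is not needed for the statement as given.
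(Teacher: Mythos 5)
Your argument is correct and is essentially the paper's own proof: both combine Theorem \ref{nonclosure2} (giving $L \in \DCM(2,1)$ with $L\Sigma^* \notin \DCM$) with the closure of $\DCMNE$ under right concatenation with regular languages (Theorem \ref{DCMNEwithREG}) and the trivial inclusion $\DCMNE \subseteq \DCM$ to derive the contradiction. The only difference is a cosmetic reordering of where the contradiction is drawn.
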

\begin{proof}
Consider the language $L$ from the proof of Theorem \ref{nonclosure2}. This theorem shows $L \in \DCM(2,1)$, but that $L \Sigma^* \notin \DCM$, which therefore
implies $L\Sigma^* \notin \DCMNE$. Suppose, by contradiction that $L \in \DCMNE$.
But, $\DCMNE$ is closed under concatenation with $\Sigma^*$ by Theorem \ref{DCMNEwithREG}, and therefore
$L\Sigma^* \in \DCMNE$, a contradiction.
\qed \end{proof}

Hence, the right end-marker is necessary for deterministic counter machines when there are at least two $1$-reversal-bounded counters. In fact, without it, no amount of reversal-bounded counters with a deterministic machine could accept even some languages that can be accepted with two $1$-reversal-bounded counters could with the end-marker.

Furthermore, if $L$ is a $\DCM$ language, then $L\$$ ($\$$ a new symbol)
is in $\DCMNE$. Therefore, if $\DCMNE$ were closed under right quotient
with a single symbol, then $\DCM$ would be equal to $\DCMNE$ which
is not true. Thus, the following result is obtained.
\begin{corollary}
$\DCMNE$ is not closed under right quotient with a single symbol.
\end{corollary}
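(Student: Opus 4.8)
The plan is to derive this as an immediate consequence of the separation between $\DCM$ and $\DCMNE$ established just above. First I would recall the standard fact that for any language $L \subseteq \Sigma^*$ with $L \in \DCM(k,l)$ and a symbol $\$ \notin \Sigma$, the marked language $L\$$ lies in $\DCMNE(k,l)$: the machine for $L$ already uses its right end-marker $\lhd$ to decide acceptance, and we can instead treat the first occurrence of $\$$ on the input as if it were $\lhd$, accepting by final state without end-marker exactly when $M$ would have accepted $L$. (This is essentially the commented-out corollary in the excerpt; I would state it inline as a one-line observation rather than invoke it.)

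Next I would argue by contradiction. Suppose $\DCMNE$ were closed under right quotient with a single symbol, i.e., whenever $L' \in \DCMNE$ then $L'\$^{-1} = \{w \mid w\$ \in L'\} \in \DCMNE$. Take any $L \in \DCM$; say $L \in \DCM(k,l)$ over alphabet $\Sigma$, and let $\$$ be a fresh symbol. By the observation, $L\$ \in \DCMNE$, and then by the assumed closure, $(L\$)\$^{-1} = L \in \DCMNE$. Since $L$ was an arbitrary $\DCM$ language, this gives $\DCM \subseteq \DCMNE$; combined with the trivial inclusion $\DCMNE \subseteq \DCM$, we would get $\DCM = \DCMNE$.

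Finally I would observe that this equality is false: the corollary immediately preceding (``There are languages in $\DCM(2,1)$ that are not in $\DCMNE$'') exhibits a witness $L \in \DCM(2,1) \setminus \DCMNE$, so $\DCM \neq \DCMNE$. This contradiction completes the proof. There is essentially no obstacle here — the only point requiring a sentence of care is the routine claim that $L\$ \in \DCMNE$ when $L \in \DCM$, and even that is transparent from the role of the end-marker in the definition of $\DCMNE$; the whole argument is a two-step reduction chaining the marking construction with the hypothetical quotient closure against the already-proven non-collapse of the end-marker hierarchy.
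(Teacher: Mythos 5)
Your proposal is correct and follows essentially the same route as the paper: mark a $\DCM$ language $L$ as $L\$ \in \DCMNE$, apply the hypothetical quotient closure to conclude $\DCM = \DCMNE$, and contradict the preceding corollary separating $\DCM(2,1)$ from $\DCMNE$. Your inline justification that $L\$ \in \DCMNE$ (treating $\$$ as the end-marker) is the same observation the paper relies on.
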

This is in contrast to $\DCM$ which is closed under right quotient
with context-free languages \cite{deletionDCM}, but requires
the end-marker for this proof, and therefore the end-marker
cannot be removed.

\subsection{Non-closure for Inverse Suffix, Outfix and Embedding}

\begin{theorem}
\label{inversesuffix}
There exists a language $L \in \DCM(1,1)$ such that $\suff^{-1}(L) \not \in \DCM$
and $\suff^{-1}(L) \not \in \TwoDCM(1)$.
\end{theorem}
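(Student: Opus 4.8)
The plan is to reuse the ``undecidability transfer'' machinery built around the language $L_0$ from Lemma \ref{lem1}, exactly in the style of Theorems \ref{prop:inverseNotInDCM} and \ref{nonclosure2}. First I would choose $L$ to be a $\DCM(1,1)$ language encoding invalid single-phase transitions of the 2-counter machine $T$, but padded on the \emph{left} so that the suffix-ideal operation $\suff^{-1}(L)=\Sigma^* L$ can ``cut off'' an arbitrary prefix. Concretely, take $L = \{\# q1^m \# p 1^n \# \mid q1^m \not\Rightarrow p1^n \text{ in } T\}$ over $\Sigma = \{\#,1\}\cup Q$, which lies in $\DCM(1,1)$ by the same argument as in Theorem \ref{prop:inverseNotInDCM}: the machine reads the first configuration using the input head to ``count down'' the decreasing counter while using its single $1$-reversal counter for the increasing one, then checks the second configuration does not match. (If a single pair is not enough to force a contradiction I would instead take the variant used in Theorem \ref{nonclosure2}, building $L$ so that $\suff^{-1}(L)$ restricted to a regular set yields the blocks-of-$1$'s language $L_1$ or $L_2$; the suffix ideal is as flexible as the prefix ideal here, since intersecting $\Sigma^* L$ with $\#\{a,b,\#\}^*\#$ discards everything before the first surviving $\#$.)

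The core step is the contradiction argument. Assume $\suff^{-1}(L)\in\DCM$ (resp.\ $2\DCM(1)$). By Proposition \ref{thm1}, $\overline{\suff^{-1}(L)}$ is in the same class, and intersecting with a suitable regular language $R$ of the form $(\#Q1^+\#Q1^+)^+\#$ (or the appropriate $R$ matching the $L_1/L_2$ construction) yields $L_0$ — or a language from which $L_0$ is obtained by one more Boolean-plus-regular manipulation, as in the proof of Theorem \ref{nonclosure2}. Since both $\DCM$ and $2\DCM(1)$ are closed under Boolean operations and intersection with regular languages (Proposition \ref{thm1}), this places $L_0$ in $\DCM\cup 2\DCM(1)$, contradicting Lemma \ref{lem1}. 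Hence $\suff^{-1}(L)$ lies in neither class, giving both halves of the statement simultaneously.

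The main obstacle is getting the encoding $L$ to be simultaneously (i) genuinely in $\DCM(1,1)$ — one counter, one reversal — and (ii) such that the \emph{suffix} ideal, rather than the prefix ideal, exposes $L_0$ after a regular intersection and complementation. The prefix-ideal proof (Theorem \ref{nonclosure2}) worked because appending $\Sigma^*$ on the right let the machine ``see'' an arbitrary truncation point of a long sequence of blocks; for the suffix ideal I need the sequence structure to be read starting from an arbitrary \emph{left} truncation, so the delimiter discipline (leading $\#$, the regular filter $R$) must be arranged so that $\Sigma^* L$ intersected with $R$ forces exactly the ``every consecutive pair is a valid successor'' condition on an unbounded list of configurations. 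I expect this to work by essentially mirroring the $L''$, $L_1$, $L_2$, $L_{odd}$, $L_{even}$ decomposition of Theorem \ref{nonclosure2}, with the roles of prefix and suffix swapped, so the only real content is checking that each reduction language stays in $\DCM$ (resp.\ $2\DCM(1)$) under the suffix-ideal variant — which it does, since the reductions only use closure under intersection with regular sets, complement, and the head-speed-scaling trick used to build machine $A$ there.
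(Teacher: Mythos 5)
Your central reduction does not go through for the $L$ you chose. A word of $L=\{\#q1^m\#p1^n\#\mid q1^m\not\Rightarrow p1^n\}$ begins with $\#$, ends with $\#$, and contains \emph{exactly two} configurations. So for $w=\#I_1\#I_2\#\cdots\#I_{2m}\#$ in the regular filter $(\#Q1^+\#Q1^+)^+\#$, the only suffix of $w$ that can lie in $L$ is $\#I_{2m-1}\#I_{2m}\#$ (every $\#$ in $w$ except the last is immediately followed by a state symbol, and a suffix starting at the $\#$ before $I_j$ contains $2m-j+1$ configurations, forcing $j=2m-1$). Hence $\suff^{-1}(L)\cap(\#Q1^+\#Q1^+)^+\#$ expresses only ``the \emph{last} pair is not a valid successor,'' and its complement inside the filter says ``the last pair is a valid successor'' --- a condition easily checkable in $\DCM(1,1)$. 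You never recover $L_0$, which requires validity of \emph{every} consecutive pair, so no contradiction with Lemma \ref{lem1} is obtained. This is exactly the asymmetry between the infix ideal (where the offending pair may sit anywhere, so complementation yields the universally quantified condition) and the suffix ideal of this particular $L$. Your fallback, mirroring Theorem \ref{nonclosure2}, has a different defect: the witness there, $\{\#w\#\mid w\in\{a,b,\#\}^*, |w|_a\neq|w|_b\}$, is only placed in $\DCM(2,1)$ (a single $1$-reversal counter cannot deterministically compare arbitrarily interleaved $a$'s and $b$'s), so the suffix-mirrored argument would at best prove the statement for $\DCM(2,1)$, not the claimed $\DCM(1,1)$.

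The missing idea is a one-line composition, and it is how the paper argues: do not use $L$ itself as the witness, but $\pref^{-1}(L)=L\Sigma^*$, which is still in $\DCM(1,1)$ by Corollary \ref{dcminverseprefix}. Then $\suff^{-1}(\pref^{-1}(L))=\Sigma^*L\Sigma^*=\infx^{-1}(L)$, which Theorem \ref{prop:inverseNotInDCM} has already shown lies outside $\DCM$ and outside $2\DCM(1)$; if $\suff^{-1}(L\Sigma^*)$ were in either class we would contradict that theorem directly. With this substitution all of the $L_0$/$L_1$/$L_2$ machinery you planned to redo becomes unnecessary.
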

\begin{proof}
Let $L$ be 	as in Theorem \ref{prop:inverseNotInDCM}.
We know $\DCM(1,1)$ is closed under $\pref^{-1}$ by Corollary \ref{dcminverseprefix}, so
$\pref^{-1}(L) \in \DCM(1,1)$.
Suppose $\suff^{-1}(\pref^{-1}(L)) \in \DCM$.
This implies that $\inf^{-1}(L) \in \DCM$, but we showed this language was not in $\DCM$.
Thus we have a contradiction.
A similar contradiction can be reached if we assume $\suff^{-1}(\pref^{-1}(L)) \in \TwoDCM(1)$.
\qed \end{proof}

\begin{corollary}
\label{leftreg}
There exists $L \in \DCM(1,1)$ and regular language $R$ such that $RL \notin \DCM$ and $RL \notin \TwoDCM(1)$.
\end{corollary}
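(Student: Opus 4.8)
The plan is to derive Corollary \ref{leftreg} as an immediate consequence of Theorem \ref{inversesuffix} by recognizing that the inverse suffix operation $\suff^{-1}(L) = \Sigma^* L$ is itself a left concatenation with the regular language $R = \Sigma^*$. First I would take the language $L \in \DCM(1,1)$ supplied by Theorem \ref{inversesuffix} (which is the same $L$ as in Theorem \ref{prop:inverseNotInDCM}), and set $R = \Sigma^*$, where $\Sigma = \{a, b, \#\}$ is the alphabet over which $L$ is defined. Then $RL = \Sigma^* L = \suff^{-1}(L)$ by the definition of $\suff^{-1}$ given in Definition \ref{def:opGeneralize}.

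Since Theorem \ref{inversesuffix} already establishes that $\suff^{-1}(L) \notin \DCM$ and $\suff^{-1}(L) \notin \TwoDCM(1)$, the conclusion $RL \notin \DCM$ and $RL \notin \TwoDCM(1)$ follows directly. There are essentially no new steps required; the corollary is just a restatement of Theorem \ref{inversesuffix} phrased in terms of concatenation rather than the inverse suffix operation. I would simply note that $\Sigma^*$ is regular and that left-concatenating it with $L$ yields exactly the inverse suffix closure.

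The only thing worth a sentence of commentary is that this shows non-closure of $\DCM$ (and $\TwoDCM(1)$) under left concatenation with regular languages in general, which stands in contrast to Corollary \ref{prefixfreeregular}, where left concatenation with a \emph{prefix-free} regular language preserves membership in $\DCM(k,l)$. The regular language $\Sigma^*$ is of course not prefix-free, so there is no conflict. There is no real obstacle here: the entire content has been done in Theorem \ref{inversesuffix}, and this corollary merely repackages it.

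\begin{proof}
Let $L \in \DCM(1,1)$ be the language of Theorem \ref{inversesuffix}, over the alphabet $\Sigma$, and let $R = \Sigma^* \in \REG$. Then $RL = \Sigma^* L = \suff^{-1}(L)$ by Definition \ref{def:opGeneralize}. By Theorem \ref{inversesuffix}, $\suff^{-1}(L) \notin \DCM$ and $\suff^{-1}(L) \notin \TwoDCM(1)$, so $RL \notin \DCM$ and $RL \notin \TwoDCM(1)$.
\qed \end{proof}
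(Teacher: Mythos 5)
Your proof is correct and is exactly the paper's intended reading: the corollary is stated without proof precisely because $\suff^{-1}(L) = \Sigma^* L$ realizes left concatenation with the regular language $R = \Sigma^*$, so Theorem \ref{inversesuffix} gives the result immediately. One cosmetic slip: the witness $L$ (namely $\pref^{-1}$ of the language of Theorem \ref{prop:inverseNotInDCM}) is over an alphabet containing $\#$, $1$, and the state symbols of $T$, not $\{a,b,\#\}$, but this does not affect the argument since any alphabet's $\Sigma^*$ is regular.
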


This implies that without the prefix-free condition on $L_1$ in Theorem \ref{concatenationprefixfreeDCM}, concatenation closure does not follow.
\begin{corollary}
\label{leftdcmne}
There exists $L_1 \in \DCMNE(0,0)$ (regular), and $L_2 \in \DCM(1,1)$, where $L_1L_2 \notin \DCM$ and $L_1L_2 \notin 2\DCM(1)$.
\end{corollary}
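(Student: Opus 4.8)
The plan is to derive Corollary~\ref{leftdcmne} directly from the machinery already established, with essentially no new construction. First I would observe that Corollary~\ref{leftdcmne} is just Corollary~\ref{leftreg} restated with the regular language viewed as a $\DCMNE$ language: every regular language $R$ is accepted by a DFA, which is trivially a deterministic $0$-counter machine with $0$-reversal-bound that accepts by final state, and removing the end-marker changes nothing for a machine with no counters, so $R \in \DCMNE(0,0)$. Thus it suffices to take the same witnesses $R$ and $L$ produced by Corollary~\ref{leftreg}.

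The key steps, in order, are: (1) Invoke Corollary~\ref{leftreg} to obtain a language $L \in \DCM(1,1)$ and a regular language $R$ with $RL \notin \DCM$ and $RL \notin 2\DCM(1)$. (2) Set $L_1 = R$ and note $L_1 \in \DCMNE(0,0)$ since a DFA is a $\DCM$-machine with no counters and the end-marker is irrelevant with no counters; set $L_2 = L \in \DCM(1,1)$. (3) Conclude $L_1 L_2 = RL \notin \DCM$ and $L_1 L_2 \notin 2\DCM(1)$, which is exactly the claim. Tracing back further, Corollary~\ref{leftreg} itself comes from Theorem~\ref{inversesuffix} by taking $R = \Sigma^*$, since $\suff^{-1}(L') = \Sigma^* L'$ for the relevant $L' = \pref^{-1}(L) \in \DCM(1,1)$; so concretely one may also just take $L_1 = \Sigma^*$ and $L_2 = \pref^{-1}(L_0')$ where $L_0'$ is the $\DCM(1,1)$ language of Theorem~\ref{prop:inverseNotInDCM}.

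There is essentially no obstacle here — the corollary is a packaging result whose entire content is the remark that regular languages sit inside $\DCMNE(0,0)$, thereby showing that the prefix-free hypothesis in Theorem~\ref{concatenationprefixfreeDCM} genuinely cannot be dropped. If anything, the only point requiring a half-sentence of care is justifying $R \in \DCMNE(0,0)$: one must note that "non-exiting"/end-marker subtleties are vacuous when $k = l = 0$, so the class of regular languages coincides with $\DCMNE(0,0)$, and in particular $\Sigma^*$ belongs to it.
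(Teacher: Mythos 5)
Your proposal is correct and matches the paper's intended argument: the corollary is stated without a separate proof precisely because it follows immediately from Corollary~\ref{leftreg} (equivalently Theorem~\ref{inversesuffix} with $R=\Sigma^*$) together with the observation, already made in the text, that a regular language such as $\Sigma^*$ lies in $\DCMNE(0,0)$. Your tracing of the witnesses back to $\pref^{-1}$ of the language from Theorem~\ref{prop:inverseNotInDCM} is exactly the chain the paper uses.
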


The result also holds for inverse outfix.
\begin{theorem}
\label{inverseoutfix}
There exists a language $L \in \DCM(1,1), L \subseteq \Sigma^*$ 
such that $\outf^{-1}(L) \not \in \DCM$ and $\outf^{-1}(L) \not \in 2\DCM(1)$, where $\outf^{-1}(L)\subseteq (\Sigma \cup \{\$\})^*$.
\end{theorem}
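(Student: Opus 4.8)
\textbf{Proof proposal for Theorem \ref{inverseoutfix}.}

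The plan is to follow the same blueprint that worked for inverse infix (Theorem \ref{prop:inverseNotInDCM}) and inverse suffix (Theorem \ref{inversesuffix}): exhibit a concrete $L\in\DCM(1,1)$, apply $\outf^{-1}$, intersect the result with a suitable regular language, complement, and recover the language $L_0$ of Lemma \ref{lem1}, which is known not to lie in $\DCM\cup2\DCM(1)$. The one subtlety is that $\outf^{-1}$ inserts a \emph{single} arbitrary block $w$ at \emph{one} arbitrary position, so unlike $\inf^{-1}(L)=\Sigma^*L\Sigma^*$ we cannot surround a word of $L$ by two independent free blocks. To regain control of where the inserted block sits, I would work over the enlarged alphabet $\Sigma\cup\{\$\}$ indicated in the statement and design $L$ so that the inserted block is forced, by a regular constraint, to land exactly where we want it.

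Concretely, take $T$, $Q$, the phase map $h$, the configuration encodings $I_j=q1^k$, and the language $L_0$ exactly as set up before Lemma \ref{lem1}. Let $L$ be (essentially) the $\DCM(1,1)$ ``bad-successor'' language of Theorem \ref{prop:inverseNotInDCM}, namely $L=\{\#I\#I'\#\mid I,I'$ configurations of $T$, $I'$ not a valid successor of $I\}$, possibly prefixed/suffixed with a marker such as $\$$ so that I can pin down the outfix split point. I would then argue that $\outf^{-1}(L)$, intersected with a regular language of the form $\$(\#Q1^+)^+\#\$\,(\Sigma\cup\{\$\})^*$ or similar, lets the free block $w$ realize an arbitrary long sequence $\#I_1\#I_2\#\cdots$ of configurations while the embedded word from $L$ pins down one adjacent ``bad'' pair. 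Complementing and intersecting with $(\#Q1^+\#Q1^+)^+\#$ should then leave exactly $L_0$ (or a trivial variant of it), using that $\DCM$ and $2\DCM(1)$ are closed under Boolean operations and intersection with regular sets by Proposition \ref{thm1}. The contradiction with Lemma \ref{lem1} gives the theorem, and because the argument only invokes Proposition \ref{thm1} and Lemma \ref{lem1}, it handles $\DCM$ and $2\DCM(1)$ simultaneously, with no pumping.

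The main obstacle is the bookkeeping in the previous paragraph: making the regular ``frame'' language tight enough that, after intersection with $\outf^{-1}(L)$, the only freedom left is (i) the contents of the single inserted block $w$ and (ii) which one adjacent pair of configuration blocks is the one certified ``bad'' by the embedded $L$-word — and then checking that the complement, cut down by the regular language $(\#Q1^+\#Q1^+)^+\#$, collapses to $L_0$ rather than to something strictly larger or smaller. In particular I must make sure a string survives the complement-and-intersect exactly when \emph{every} adjacent pair is a valid successor, which is where the universal quantifier in the definition of $L_0$ comes from. I expect this to be a short but slightly fiddly argument of the same flavor as the inverse-infix case; the extra marker symbol $\$$ (already anticipated in the statement, since $\outf^{-1}(L)\subseteq(\Sigma\cup\{\$\})^*$) is exactly the device that makes it go through. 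Once the reduction to $L_0$ is in place, everything else is immediate from the cited results.
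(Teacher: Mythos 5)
There is a genuine gap, and it sits exactly in the step you deferred as ``fiddly bookkeeping.'' Your plan takes $L$ to be (a marked version of) the bad-pair language $\{\#I\#I'\#\mid I' \mbox{ not a valid successor of } I\}$ and hopes the single inserted block $w$ can supply ``an arbitrary long sequence of configurations'' around the embedded bad pair. But $\outf^{-1}$ inserts only one contiguous block, so the certified pair can receive extra configuration material on one side only: after intersecting with any regular frame, the bad pair contributed by the $L$-word can only sit at the very beginning or the very end of the configuration sequence. Hence the intersection is (at best) ``sequences whose first (or last) pair is invalid,'' whose complement inside $(\#Q1^+\#Q1^+)^+\#$ is not $L_0$ but ``sequences whose first (or last) pair is valid'' --- a language easily accepted in $\DCM$, so no contradiction with Lemma \ref{lem1} arises. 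Worse, the split point of the $L$-word may fall inside the certified pair: an entirely valid sequence $\$\#I_1\#\cdots\#I_{2m}\#$ can be written as $xwy$ with $x=\$\#q_1 1^{a}$, $y=1^{b}\#I_{2m}\#$, and $xy=\$\#q_1 1^{a+b}\#I_{2m}\#$ a bad pair for suitable $a,b$; such spurious decompositions survive any regular frame, because a regular constraint cannot tie the content of $w$ to the position of the split. So the complement-and-intersect does not collapse to $L_0$, and the reduction as sketched fails.

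The paper does not reduce inverse outfix directly to $L_0$; it reduces it to the inverse-suffix result. Take the $\DCM(1,1)$ language $L\subseteq\Sigma^*$ of Theorem \ref{inversesuffix} (which already carries the arbitrary tail after the bad pair, being built via $\pref^{-1}$ from the pair language of Theorem \ref{prop:inverseNotInDCM}), and intersect $\outf^{-1}(L)$ with $\%\,\Sigma^*$ over $\Sigma\cup\{\%\}$: since words of $L$ contain no $\%$, the inserted block must be the entire prefix beginning with $\%$, so the intersection equals $\%\,\suff^{-1}(L)$; quotienting the fixed symbol $\%$ on the left (a closure property of both $\DCM$ and $2\DCM(1)$) then contradicts Theorem \ref{inversesuffix}. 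Your marker instinct is right, but it has to be used to force the whole insertion to one end of a word that already ends with an arbitrary tail, not to let the inserted block host the sequence around an interior bad pair.
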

\begin{proof}
Consider $L \subseteq \Sigma^*$ where $L \in \DCM(1,1)$, and
$\suff^{-1}(L) \not \in \DCM$ and $\suff^{-1}(L) \not \in 2\DCM(1)$.
The existence of such a language is guaranteed by Theorem \ref{inversesuffix}.
Let $\Gamma = \Sigma \cup \set{\%}$.

Suppose $\outf^{-1}(L) \in \DCM$ over $\Gamma^*$. Then $L' \in \DCM$,
where $L' = \outf^{-1}(L) \cap \% \Sigma^* $.
We can see $L' = \set{\% yx \sst x \in L, y \in \Sigma^*}$,
since the language we intersected with ensures that the section is always
added to the beginning of a word in $L$.

However, we also have $\%^{-1}L' \in \DCM$ because $\DCM$ is closed under left quotient with a fixed word (this can be seen
by simulating a machine on that fixed word before reading any input
letter).
We can see $\%^{-1}L' = \set{yx  \sst x \in L, y \in \Sigma^*}$.
This is just $\suff^{-1}(L)$, so $\suff^{-1}(L) \in \DCM$, a contradiction.

The result is the same for $2\DCM(1)$, relying on the closure of the family under left quotient with a fixed word, which can
be shown be shown by simulating the symbol to be removed on the left
input end-marker.
\qed \end{proof}

\begin{corollary}
\label{inverseembedding}
Let $m \in \natnum$. There exists a language $L \in \DCM(1,1),
L \subseteq \Sigma^*$ such that $\emb^{-1}(m,L) \not \in \DCM$ and $\emb^{-1}(m,L) \not \in 2\DCM(1)$, where $\emb^{-1}(m,L)\subseteq
(\Sigma \cup \{\#, \%\})^*$.
\end{corollary}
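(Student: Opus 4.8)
The plan is to reduce the $m$-embedding case to the outfix case, which is exactly $\emb^{-1}(1,\cdot) = \outf^{-1}(\cdot)$ and was already handled in Theorem \ref{inverseoutfix}. Start with the same language $L \in \DCM(1,1)$, $L \subseteq \Sigma^*$, for which $\suff^{-1}(L) \notin \DCM \cup 2\DCM(1)$, guaranteed by Theorem \ref{inversesuffix}. Let $\Gamma = \Sigma \cup \{\#,\%\}$, with both $\#$ and $\%$ fresh symbols. The idea is that inserting $m$ arbitrary words can be ``collapsed'' to inserting just one word at the very front: if we intersect $\emb^{-1}(m,L)$ with a suitable regular language that forces $m-1$ of the insertion points to receive only a controlled, recognizable string (e.g. a single marker symbol $\#$, or $\lambda$ if we allow empty insertions) and splits off one genuinely free insertion at the left boundary marked by $\%$, we recover $\suff^{-1}(L)$ after stripping off the marker prefixes via left quotient.

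Concretely, I would proceed as follows. Suppose for contradiction $\emb^{-1}(m,L) \in \DCM$ over $\Gamma^*$. Consider the regular language $R = \%\,\Sigma^*\,(\#)^{m-1}$ if we think of the word in $L$ being written as $w_0 w_1 \cdots w_m$ with $w_1 = \cdots = w_m = \lambda$ and $w_0 \in \Sigma^*$; then the inserted words $x_1,\dots,x_m$ must be placed so that $x_1$ sits between $w_0$ and $w_1=\lambda$, etc. By choosing $R$ so that the only consistent way for a string of $\emb^{-1}(m,L)$ to lie in $R$ is to have $w_0 x_1 \in \%\Sigma^*$ a free prefix followed by the original word and then a fixed suffix of $m-1$ markers, we get $\emb^{-1}(m,L) \cap R = \{ \% y x \#^{m-1} \mid x \in L, y \in \Sigma^* \}$ (one must be slightly careful: to keep the $\#$'s from being absorbed, let the word in $L$ range over $L$ and force the trailing markers to be the last $m-1$ inserted blocks). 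Since $\DCM$ is closed under intersection with regular languages, this language is in $\DCM$. Then apply closure of $\DCM$ under left quotient by the fixed word $\%$ and under right quotient by the fixed word $\#^{m-1}$ (right quotient with a fixed word, hence with a context-free language, holds for $\DCM$ by \cite{deletionDCM}) to obtain $\{ yx \mid x \in L, y \in \Sigma^* \} = \suff^{-1}(L) \in \DCM$, contradicting the choice of $L$. The identical argument with $2\DCM(1)$ in place of $\DCM$ works, using closure of $2\DCM(1)$ under intersection with regular languages (Proposition \ref{thm1}) and under left/right quotient with a fixed word (simulating the fixed prefix/suffix at the input end-markers, as in the proof of Theorem \ref{inverseoutfix}).

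The main obstacle is getting the bookkeeping of the $m$ insertion points exactly right so that the regular intersection genuinely isolates a single free insertion and pins the remaining $m-1$ to recognizable fixed strings, without accidentally allowing the free material to leak into the marker regions or vice versa; this is why the extra fresh symbol $\#$ (beyond the $\%$ used in Theorem \ref{inverseoutfix}) is introduced. Once the correct $R$ is chosen, everything else is routine closure under Boolean operations and quotients, exactly mirroring Theorem \ref{inverseoutfix}. A cleaner alternative, which I would mention as a remark, is to observe directly that $\emb^{-1}(m,L) \cap \bigl(\Sigma^* \#^{m-1}\bigr)$ combined with a right quotient by $\#^{m-1}$ already yields $\emb^{-1}(1, L')$ for a related language, letting one induct on $m$ and bottom out at the outfix case; but the one-shot reduction above avoids induction and is likely shorter to write.
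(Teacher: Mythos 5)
There is a genuine gap: the key claimed equality $\emb^{-1}(m,L) \cap \%\,\Sigma^*\#^{m-1} = \{\% y x \#^{m-1} \mid x \in L,\ y \in \Sigma^*\}$ is false for $m \geq 2$. Your regular constraint only controls where the marker symbols appear; it does not prevent an inserted block from carrying $\Sigma$-letters in addition to (or instead of) a marker, and so it does not pin the $m-1$ ``extra'' insertions to fixed strings. Concretely, take $m = 2$ and any $b \in L$: the word $\% a b c \#$ lies in $\emb^{-1}(2,L) \cap \%\,\Sigma^*\#$ via the decomposition $w_0 = \lambda$, $x_1 = \% a$, $w_1 = b$, $x_2 = c\#$, $w_2 = \lambda$ (so $w_0w_1w_2 = b \in L$), even if no suffix of $abc$ belongs to $L$. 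Hence after the quotients you do not obtain $\suff^{-1}(L)$ but a larger ``scattered-insertion'' language (for $m=2$ it is essentially $\infx^{-1}(L)$, and for larger $m$ a multi-block infix ideal), so the intended contradiction with Theorem \ref{inversesuffix} does not follow as written; making it follow would need a separate non-membership argument for that larger language. The same defect affects your ``cleaner alternative'' inductive remark.

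The paper closes exactly this loophole by putting the pinning markers into the base language rather than only into the regular filter: it takes the witness to be $\#^m L$ (still in $\DCM(1,1)$, which the statement permits) and intersects $\emb^{-1}(m,\#^m L)$ with $(\#\%)^m\Sigma^*$. Since the base word supplies exactly the $m$ $\#$'s and the target pattern demands one $\%$ strictly between consecutive $\#$'s and one immediately after the last $\#$, each of the $m$ inserted blocks is forced into one of these $m$ front slots: the first $m-1$ must equal $\%$ exactly and the last must be $\% y$ with $y \in \Sigma^*$, so no inserted material can leak into the $L$-portion. The intersection is then exactly $\{(\#\%)^m y x \mid x \in L,\ y \in \Sigma^*\}$, and a left quotient by the fixed word $(\#\%)^m$ yields $\suff^{-1}(L)$, giving the contradiction for both $\DCM$ and $2\DCM(1)$ just as in Theorem \ref{inverseoutfix}. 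If you want to salvage your version, you would have to modify the witness language in this spirit; the intersection-only approach cannot work because, inside $\emb^{-1}(m,\cdot)$, the original word of $L$ need not occur contiguously in the part of the string your filter leaves free.
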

\begin{proof}
Consider $L$ as in Theorem \ref{inverseoutfix} above,
and let $\Gamma = \Sigma \cup \{\#, \%\}$.
Let $\emb^{-1}(m,L)$ over $\Gamma^*$.
Then
$$\emb^{-1}(m,\#^{m}L) \cap (\#\%)^{m}\Sigma^* = \{(\#\%)^{m} yx \mid x\in L, y \in \Sigma^*\},$$ 
since this enforces that all $m$-embedded words are of the form $\%\#$ except the $m$'th, which may
also insert an arbitrary $y \in \Sigma^*$ before $x \in L$. The rest proceeds just like 
Theorem \ref{inverseoutfix}.
\qed
\end{proof}

\section{Inverse Transducers}

This section studies transducers with reversal-bounded counters and other stores attached.
Using the inverse of such transducers allows for creating elaborate methods
of insertion (such as in Example \ref{example1} below).
It is shown that $\DCM$ is closed under inverse deterministic reversal-bounded multicounter transductions, and $\NCM$ is closed under inverse nondeterministic
reversal-bounded multicounter transductions, and they are both the
smallest family of languages where this holds. Hence, this demonstrates
a method of defining insertion operations under which $\DCM$ is closed
(in contrast to the insertion methods of Section \ref{sec:nonclosure}).

\begin{definition}
A {\em $k$-counter transducer} 
$A =(k,Q,\Sigma,\Gamma, \lhd, \delta, q_0,F)$ where $Q, \Sigma, \Gamma, \lhd, q_0, F$ are respectively the sets
of states, input alphabet, output alphabet, right end-marker 
(not in $\Sigma \cup \Gamma$), initial state $q_0 \in Q$,
and set of final states $F\subseteq Q$. 
The transition function is a partial function from 
$Q \times (\Sigma \cup \{\lhd\}) \times \{0,1\}^k$ into
the family of subsets of
$Q \times \{{\rm R},{\rm S}\} \times \{-1,0,+1\}^k  \times \Gamma^*$.  $M$ is deterministic if every element mapped by $\delta$ is to a subset with one element in it, and if
$\delta(F \times \{\lhd\} \times \{0,1\}^k) = \emptyset$ to prevent
multiple outputs from the same input on deterministic transducers.
A configuration of $A$ is of the form $(q,w\lhd, c_1, \ldots, c_k, z)$,
where $q \in Q$ is the current state, $w\in \Sigma^*$ is the remaining
input, $c_1, \ldots, c_k \in \mathbb{N}_0$ are the counter contents, and 
$z \in \Gamma^*$ 
is the accumulated output. Then,
$(q,aw, c_1,\ldots, c_k, z) \vdash_A (p,w', c_1 +d_1, \ldots, c_k +d_k, z'), a \in \Sigma \cup \{\lhd\}, 
aw,w' \in \Sigma^* \lhd$, where 
$(p,d,d_1, \ldots, d_k,x) \in \delta(q,a,\pi(c_1),\ldots,\pi(c_k)), 
z' = zx, (d = {\rm S} \Rightarrow aw = w')$, and 
$(d = {\rm R} \Rightarrow w = w')$. 
Then $\vdash_A^*$ is the reflexive-transitive closure of $\vdash_A$.
In the definition above, if there are no counters, then $k$ and the counter
contents are left off of the definitions.

Let $L\subseteq \Sigma^*$, and let 
$A= (k,Q,\Sigma,\Gamma,\lhd,\delta,q_0,F)$ be 
a $k$-counter transducer. Then 
$$A(L) = \{x \mid (q_0,w\lhd, 0, \ldots, 0, \lambda) 
\vdash_A^* (q_f, \lhd, c_1, \ldots, c_k, x),w \in L, q_f \in F\}.$$
Let $L \subseteq \Gamma^*$. Then
$$A^{-1}(L) = \{w \mid (q_0, w\lhd,0, \ldots, 0,\lambda) \vdash_A^* 
(q_f,\lhd,c_1, \ldots, c_k,x), x \in L, q_f \in F\}.$$
Also, $A$ is $l$-reversal-bounded if all counters are $l$-reversal-bounded
on input $\Sigma^*$.
\end{definition}

From this definition, the following closure property can be obtained.
\begin{lemma}
\label{invTransductions}
$\DCM$ is closed under inverse deterministic reversal-bounded
counter transductions, and $\NCM$ is closed under inverse reversal-bounded
counter transductions. 
\end{lemma}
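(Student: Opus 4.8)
The plan is to prove both closure statements by a direct product-machine construction, handling the deterministic and nondeterministic cases in parallel since the argument is structurally the same. Let $A = (k_A, Q_A, \Sigma, \Gamma, \lhd, \delta_A, q_{0,A}, F_A)$ be the (deterministic, for the first claim) reversal-bounded counter transducer, and let $L \subseteq \Gamma^*$ be given by a $\DCM$ machine $M = (k_M, Q_M, \Gamma, \lhd, \delta_M, q_{0,M}, F_M)$ (respectively an $\NCM$ machine for the second claim). I want to build a machine $N$ over $\Sigma$ recognizing $A^{-1}(L)$. The idea is that $N$ reads its input $w \in \Sigma^*$ and simulates $A$ on $w$; whenever $A$ would emit an output block $x \in \Gamma^*$ on a transition, $N$ feeds $x$, one symbol at a time, into a simulation of $M$. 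Since each output block is a fixed string determined by the transition, this feeding is done by a bounded sequence of $\lambda$-moves (stay transitions) in $N$ that step $M$ through the symbols of $x$. At the end, $N$ accepts iff both the $A$-simulation is in a state of $F_A$ having consumed all of $w$ and reached $\lhd$, and the $M$-simulation is in a state of $F_M$ having consumed all of $x$ (the accumulated output) and reached its own end-marker.

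The key steps, in order: first, set up $N$'s state set as $Q_A \times Q_M \times (\text{finite bookkeeping})$, where the bookkeeping records the suffix of the current output block still to be fed to $M$; since there are finitely many transitions of $A$ and each has a fixed output string, only finitely many such suffixes arise, so the state set is finite. Second, define $N$'s counters to be the disjoint union of $A$'s $k_A$ counters and $M$'s $k_M$ counters; each counter is updated exactly as in its parent machine, so $N$ is reversal-bounded with the sum of the reversal bounds (in particular it is a legitimate $\DCM$, resp. $\NCM$, machine). Third, define $N$'s transitions: on reading an input symbol $a \in \Sigma$, $N$ applies the corresponding $\delta_A$ transition (moving right or staying on the input as $A$ does), records the emitted output block in the bookkeeping component, and does not move $M$; then $N$ performs a sequence of $\lambda$-moves that consume the recorded output block symbol by symbol through $\delta_M$, updating $M$'s state and counters; only when the block is exhausted does $N$ read the next input symbol. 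One must be careful that $M$'s behavior on $\lhd$ is handled correctly — $M$ expects a right end-marker after its entire (virtual) input, so $N$ simulates $M$'s $\lhd$-transition only after $A$ has itself reached $\lhd$ and is ready to accept. Fourth, argue correctness: by induction on the length of computations, $N$ has an accepting run on $w$ iff $A$ on input $w$ reaches $q_f \in F_A$ with accumulated output $x$ and $M$ accepts $x$, which is exactly the definition of $w \in A^{-1}(L)$. In the deterministic case, determinism of $A$ and of $M$ plus the fact that the bookkeeping/feeding is itself deterministic gives that $N$ is deterministic; in the nondeterministic case no such care is needed.

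The main obstacle I anticipate is the interaction between $M$'s counter zero-tests and the $\lambda$-moves that feed it output symbols: $M$'s transitions depend on the $\pi$-signatures of its counters, and during the feeding phase $N$ must present those signatures correctly to $\delta_M$ at each micro-step, which is routine but must be stated precisely. A subtler point is ensuring reversal-boundedness: we need that $A$ emits only finitely many distinct output strings and that $M$ is $l$-reversal-bounded \emph{on the inputs that actually arise} (namely, strings in $A(\Sigma^*)$); since $A$'s reversal bound is only guaranteed on input $\Sigma^*$ per the definition, and $M$'s reversal bound is guaranteed on all of $\Gamma^*$, this is fine, but the write-up should note it. The ``smallest family'' claim (that $\DCM$, resp. $\NCM$, is the smallest family closed under these inverse transductions) is presumably proved separately — likely by observing that the identity-plus-counter transducers already generate all of $\DCM$ (resp. $\NCM$) from, say, the regular languages or a trivial base family — so I would treat it in a short follow-up paragraph rather than folding it into this construction.
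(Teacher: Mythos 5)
Your construction is essentially the paper's own proof: simulate $A$ on the input while buffering each emitted output block in the finite control, feed the buffer symbol by symbol into a simulation of $M$ on a disjoint set of counters (via stay moves), and append $M$'s end-marker only once $A$ has reached $\lhd$ in an accepting state, with determinism and reversal-boundedness preserved exactly as you argue (the paper states the bound as $\max\{l,l_A\}$ per counter rather than a sum, a cosmetic difference). The ``smallest family'' statement is indeed handled separately in the paper (Theorem \ref{DCMCharacterization}), so deferring it is appropriate.
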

\begin{proof}
Let $M = (k,Q,\Gamma, \lhd, \delta,q_0,F)$ be a $k$-counter
$l$-reversal-bounded $\DCM$.
Let $A = (k_A,Q_A,\Sigma,\Gamma,\lhd, \delta_A, q_A, F_A)$ be a deterministic 
$l_A$-reversal-bounded $k_A$-counter transducer.

Then we construct a $\max\{l,l_A\}$-reversal-bounded $\DCM$ machine
$M' = (k+k_A, Q',\Sigma,\lhd, \delta', q_0',F')$ accepting 
$A^{-1}(L(M))$
as follows: 
$M'$ takes as input a word $a_1 \cdots a_n \in \Sigma^*, a_i \in \Sigma$ followed by the end marker $\lhd$. 
In the states of $Q'$, $M'$ keeps a buffer of at most length 
$\alpha = \max\{|x| \mid (p,d,d_1, \ldots, d_k,x) \in \delta_A(q,a,i_1, \ldots, i_k)\} +1$. Then 
on each input letter, $a_i$, $M'$ simulates one transition of $A$ on $a_i$, and
stores the (deterministically calculated) output in the buffer, while using
the first $k_A$ counters. 
If the buffer becomes non-empty, $M'$ simulates
$M$ on the buffer and the remaining $k$ counters. 
Once the buffer becomes empty again, $M'$
continues the simulation of $A$ (on $a_i$ if the transition of $A$ applied
last was a stay transition, and on $a_{i+1}$ if it was a right transition).
If $M'$ reaches the end-marker of $A$, and $A$ is in a final state, then $M'$
puts the end-marker $\lhd$ at the end of the output buffer. If this occurs, then $M'$ continues simulating $M$ on the buffer, accepting if it reaches
a state of $F$ with only $\lhd$ in the buffer.

The proof is similar for $\NCM$.
\qed \end{proof}

This same proof technique can be generalized to other models where stores can be combined without increasing the
capacity. But even when, for example, combining two arbitrary (non-reversal-bounded counters) counters, such
machines already have the full power of Turing machines.

From this, we can immediately get a relatively simple characterization
of $\DCM$ and $\NCM$ languages.
\begin{theorem}
\label{DCMCharacterization}
$L$ is in $\DCM$ ($\NCM$ respectively) if and only if there is a deterministic
(nondeterministic) reversal-bounded counter transducer $A$ such that
$L = A^{-1}(\{\lambda\})$. Hence,
$\DCM$ ($\NCM$ respectively) is the smallest family of languages containing $\{\lambda\}$ that
is closed under inverse deterministic (nondeterministic) reversal-bounded counter transductions.
\end{theorem}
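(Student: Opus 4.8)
The plan is to deduce everything from Lemma \ref{invTransductions} together with the observation that the singleton language $\{\lambda\}$ is trivially in $\DCM$ (and in $\NCM$), being accepted by a one-state machine with no counters that accepts exactly on reading the end-marker. For the ``if'' direction: given a deterministic reversal-bounded counter transducer $A$ with $L = A^{-1}(\{\lambda\})$, Lemma \ref{invTransductions} (applied with $M$ a $\DCM$ machine for $\{\lambda\}$) immediately yields $A^{-1}(\{\lambda\}) = A^{-1}(L(M)) \in \DCM$, and symmetrically for $\NCM$. So it remains only to establish the ``only if'' direction, which is the substantive part.

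For the ``only if'' direction, let $L \in \DCM$, accepted by a deterministic $l$-reversal-bounded $k$-counter machine $M = (k,Q,\Sigma,\lhd,\delta,q_0,F)$. I would construct a deterministic reversal-bounded counter transducer $A$ over input alphabet $\Sigma$ and output alphabet, say, $\{\lambda\}$'s alphabet (take $\Gamma = \emptyset$, or a singleton never used) that simply simulates $M$ step for step: $A$ has the same states, same counters with the same reversal behaviour, reads its input left to right exactly as $M$ does, and on every transition outputs $\lambda$. Its final states are exactly $F$. Since $M$ is deterministic and its transitions on $F \times \{\lhd\} \times \{0,1\}^k$ can be assumed empty (or removed, as $M$ may be taken non-exiting on the end-marker without changing the accepted language), $A$ meets the determinism condition $\delta(F \times \{\lhd\} \times \{0,1\}^k) = \emptyset$. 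Then $A$ produces output $\lambda$ on input $w$ and ends in a final state if and only if $M$ accepts $w$, i.e. $A^{-1}(\{\lambda\}) = \{w \mid A \text{ outputs } \lambda \text{ on } w \text{ and halts in } F\} = L(M) = L$. The $\NCM$ case is identical with a nondeterministic machine and transducer.

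Finally, for the ``smallest family'' claim: any family $\mathcal{F}$ containing $\{\lambda\}$ and closed under inverse deterministic reversal-bounded counter transductions must, by the argument above, contain $A^{-1}(\{\lambda\})$ for every such $A$, hence contain every $\DCM$ language; conversely $\DCM$ itself contains $\{\lambda\}$ and is closed under these inverse transductions by Lemma \ref{invTransductions}, so $\DCM$ is such a family and is contained in every other one, i.e. it is the smallest. The same reasoning gives the $\NCM$ statement. I do not anticipate a real obstacle here — the only point requiring a little care is checking that the trivial ``simulate $M$, output nothing'' transducer satisfies the syntactic determinism restriction on end-marker transitions, which is handled by first putting $M$ in a form that has no transitions defined on final states at the end-marker (a standard normalization, analogous to the non-exiting forms used earlier in the paper).
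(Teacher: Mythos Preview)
Your proposal is correct and follows essentially the same approach as the paper: build the transducer $A$ from the acceptor $M$ by copying states, transitions, and final states, outputting $\lambda$ on every move, after first normalizing $M$ to have no transitions on final states at the end-marker. Your write-up is in fact more complete than the paper's, which only spells out the ``only if'' direction and leaves the ``if'' direction and the ``smallest family'' claim implicit from Lemma \ref{invTransductions}.
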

\begin{proof}
Let $M$ be a $\DCM$ machine which, without loss of generality, does not have any transitions defined
on a final state and the end-marker (these can be removed without changing the language accepted). Let $A$ be the 
reversal-bounded multicounter transducer that is obtained from $M$ (same states, 
transitions, and final states), but outputs
$\lambda$ on every transition. Then $A$ is deterministic and 
$A^{-1}(\{\lambda\}) = \{w \mid w \in L(M)\}$. Similarly for $\NCM$.
\end{proof}

A brief example will be given next showing how such a transducer can define an insertion into a $\DCM$ language.
\begin{example}
\label{example1}
Consider $L = \{a^n b^n \mid n \geq 0\} \in \DCM(1,1)$. Then define a transducer
$A$ with one counter that on input $a$ outputs $a$, on input $b$ outputs $b$, and on
inputs $c$ and $d$ outputs $\lambda$, while verifying that all $c$'s occur
before any $d$'s and that they have the same number of occurrences. Then $A^{-1}(L) = \{w \mid w \mbox{~consists of~} a^nb^n \mbox{~shuffled with~} c^m d^m, n,m \geq 0\}$. Thus, $A^{-1}$ can ``shuffle in'' words with the same number of $c$'s and $d$'s. Alternatively, the same language could be obtained
from $\{\lambda\}$ using the inverse of a transducer $A$ with two counters that checks that the number of $a$'s is the same as the number of $b$'s and
that all $a$'s occur before any $b$'s, and similarly with $c$'s and $d$'s.
\end{example}

In the same way that we attached reversal-bounded counters to transducers, we will briefly consider attaching a single (unrestricted) counter, and also pushdowns.
The following shows that Lemma \ref{invTransductions} and Theorem \ref{DCMCharacterization} do not
generalize for acceptors and transducers with an unrestricted 
counter or with a 1-reversal pushdown.
\begin{theorem}
\begin{enumerate}
\item
There is a language $L$ accepted by a deterministic one counter automaton (i.e., a DFA with one unrestricted counter)
and a deterministic one-counter transducer (i.e. a deterministic one-counter automaton with outputs) $A$ 
such that
$A^{-1}(L)$ is not in $\NPCM$.
\item
There is a language $L$ accepted by a 1-reversal deterministic pushdown automata and a 
deterministic 1-reversal
pushdown transducer (i.e., a 1-reversal deterministic pushdown with outputs) $A$ such that
$A^{-1}(L)$ is not in $\NPCM$.
\end{enumerate}
\end{theorem}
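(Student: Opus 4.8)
The plan is to build on the idea already used in Lemma~\ref{lem1} and Theorem~\ref{prop:inverseNotInDCM}: encode the computation of a universal/non-recursive $2$-counter machine $T$ in such a way that a single unrestricted counter (resp.\ a $1$-reversal pushdown) suffices to \emph{check one phase}, because within a phase only one counter decreases while another increases. For part~(1), I would take $L$ to be the set of encodings $\#I\#I'\#$ of a configuration $I=q1^m$ of $T$ together with its successor $I'=p1^n$: a deterministic one-counter automaton reads $I$, pushing $m$ onto the counter, then while reading $I'$ simulates the phase of $T$, decrementing the counter appropriately and matching the result against $n$. (Here one unrestricted counter is needed precisely because $n$ may be $3m$ or $m/3$, so the counter must go up and down again — this is why a DCM won't do, but a DCA will.) Now define the transducer $A$ that, on input a string $\#I_1\#I_2\#\cdots\#I_{2m}\#$, outputs $\lambda$ but uses its counter to verify \emph{all} the even-odd (or odd-even) adjacent pairs are valid successors in the same single-counter fashion as $L$'s acceptor; the remaining parity of pairs is delegated to membership in $L$. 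Then $A^{-1}(L)$ is essentially $L_0$ (intersected with a suitable regular set to fix the state markers), and by Lemma~\ref{lem1} $L_0\notin\DCM\cup 2\DCM(1)$; strengthening to $L_0\notin\NPCM$ follows from the same decidability argument, since emptiness and closure under intersection with regular sets hold for $\NPCM$ \cite{Ibarra1978}, so $L_0\in\NPCM$ would decide $L_{\rm re}$.

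The subtlety is that a single transducer cannot itself check both parities of pairs with just one counter (that would require two independent counters, giving Turing power), so the construction must split the work: $A$'s counter handles one class of adjacent pairs and the target language $L$'s (single) counter handles the other. Concretely I would have $A$ output, for each odd-indexed pair $(I_i,I_{i+1})$ that it is \emph{not} checking, a copy of the block $\#I_i\#I_{i+1}\#$, and output $\lambda$ on the pairs it \emph{is} checking; then $A^{-1}(L)$, with $L$ the ``valid-successor-pair'' language above, forces every pair to be a valid successor. A clean way to organize this is to mimic the $L_{odd}\cap L_{even}$ decomposition from the proof of Theorem~\ref{nonclosure2}: take $A$ to verify $L_{even}$-type constraints internally and let $L$ (pulled back through $A$) enforce $L_{odd}$-type constraints, so that $A^{-1}(L)$ restricted to the regular set $R=\#q_s1^n((\#Q1^+\#Q1^+))^*\#q_h1^+\#$ equals $L_0\cap R$, which is empty iff $n\notin L_{\rm re}$.

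For part~(2) the construction is the same with the counter replaced by a pushdown making one reversal: reading $I=q1^m$ pushes $1^m$, and reading $I'$ during the phase simulation pops/pushes to realize the multiplication or division by $2$ or $3$ and then matches against $1^n$ — this is a single push phase followed by a single pop phase (after the initial push of $1^m$, the stack only grows while the phase ``multiplies'' and only shrinks while it ``matches''; with a little care, or by using a bounded stack alphabet to carry factors of $2,3$, one reversal suffices). Everything else — the transducer $A$, the regular intersection $R$, and the reduction from $L_{\rm re}$ — is identical, and again $\NPCM$ being closed under intersection with regular languages with decidable emptiness yields the contradiction.

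The main obstacle I anticipate is making the ``one counter, one reversal'' (resp.\ ``one-reversal pushdown'') simulation of a \emph{phase} of $T$ genuinely deterministic and genuinely reversal-$1$: the division cases ($d_{i+1}=d_i/2$ or $d_i/3$) require reading the input block while decrementing the store by a controlled ratio, and one must ensure the parity/divisibility bookkeeping lives entirely in the finite control so that the store's reversal count stays at one. This is exactly the kind of regular-pattern behavior guaranteed by Minsky's construction as recalled before Lemma~\ref{lem1}, so it should go through, but it is where the care is needed; the transducer and decidability parts are then routine, paralleling Theorem~\ref{prop:inverseNotInDCM} and Theorem~\ref{nonclosure2}.
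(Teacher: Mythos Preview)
Your approach is workable for Part~1 but takes a more complicated route than the paper, and it has a genuine gap for Part~2.

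The paper avoids the $2$-counter machine encoding entirely and instead argues via \emph{non-semilinearity}. For Part~1 it takes $L=\{a^{i_1}\#\cdots\#a^{i_k}\#\mid k\ge 2\text{ even},\ i_1=1,\ i_{j+1}=i_j+1\text{ for odd }j\}$ (checkable by a deterministic one-counter automaton, resetting between pairs), and lets $A$ be the identity-output transducer that accepts iff $i_{j+1}=i_j+1$ for \emph{even} $j$. Then $A^{-1}(L)=\{a^1\#a^2\#\cdots\#a^k\#\mid k\ge 2\text{ even}\}$, whose Parikh image is not semilinear, so it is not in $\NPCM$. This is both shorter than your undecidability reduction and matches your own ``split the work by parity'' idea, just with the trivial successor relation $i_{j+1}=i_j+1$ in place of the $2$-counter machine phase relation. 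Your Part~1 sketch is essentially sound (modulo the inconsistency that you first define $L$ as a single-pair language but then need it to accept the multi-pair output of $A$), but buys nothing extra.

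For Part~2 your proposal breaks. You say ``everything else\,\ldots\,is identical'', but a $1$-reversal pushdown (whether for $L$ or for $A$) cannot check more than one successor pair in your linear encoding $\#I_1\#I_2\#\cdots\#I_{2m}\#$: after pushing for $I_i$ and popping against $I_{i+1}$ the stack is empty, and loading $I_{i+2}$ is a second push, i.e.\ a second reversal. The paper handles this by a palindrome-style encoding: the input has the odd-indexed blocks before a marker $\$$ and the even-indexed blocks in \emph{reverse order} after it, so a single push phase (before $\$$) followed by a single pop phase (after $\$$) checks all odd pairs at once, and the transducer does the same for the even pairs. The ``main obstacle'' you anticipated (making a single phase simulation one-reversal) is not the real difficulty; checking \emph{many} pairs with one reversal is, and your construction does not address it.
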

\begin{proof}
For Part 1, let
$L =  \{ a^{i_1} \# a^{i_2} \# a^{i_3} \# \cdots \# a^{i_k} \#  \mid   k \ge 2 \mbox{~is even},   i_1 = 1,  i_{j+1} = i_j  +1 \mbox{~for odd~} j  \}$.
This language can be accepted by a deterministic one-counter automaton.

Construct a deterministic counter transducer $A$ which, on input $w$,
outputs $w$,  and accepts  if the following holds:
\begin{enumerate} 
\item $w$ is of the form  $(a^+\#)^k$ for some even $k \ge 2$.   (The finite-state control can check this.)

\item  In $w$, $i_{j+1} = i_j +1$ for even $j$.  (This needs an unrestricted counter.)
\end{enumerate}
Then  $A^{-1}(L) = \{ a^{i_1} \# a^{i_2} \# a^{i_3} \# \cdots \# a^{i_k} \# \mid  k \ge 2 \mbox{~is even}, i_1 = 1,  i_{j+1} = i_j  +1 \mbox{~for all~} j, 1 \leq j <k \}$.
However, the Parikh map of $A^{-1}(L)$ is not semilinear.  The result follows since the Parikh map of any
$\NPCM$ language is semilinear \cite{Ibarra1978}.

For Part 2, let
$L =  \{ a^{i_1} \# a^{i_3} \# a^{i_5} \# \cdots  \#  a^{i_{2k-1}} \# \$
              a^{i_{2k}} \#   a^{i_{2k-2}} \# \cdots  \# a^{i_2} \# \mid k \ge 1,   i_1 = 1,  i_{j+1} = i_j  +1 \mbox{~for odd~} j \}$.
Then $L$ can be accepted by a 1-reversal deterministic pushdown automaton.

We construct  a  deterministic 1-reversal pushdown transducer $A$ which, on input $w$, outputs $w$,  and accepts  if the following holds:
\begin{enumerate}
\item $w$ is of the form $(a^+\#)^m \$ (a^+\#)^n$ for some even $m, n \ge 1$   (The finite-state control can check this.)
\item  In $w$, $i_{j+1} = i_j +1$ for even $j$.  (This needs a 1-reversal stack).
\end{enumerate}
Then $A^{-1}(L)  =  \{ a^{i_1} \# a^{i_3} \# a^{i_5} \# \cdots  \# a^{i_{2k-1}} \# \$ 
              a^{i_{2k}} \#   a^{i_{2k-2}} \# \cdots  \# a^{i_2} \# \mid  k \ge 1,   i_1 = 1,    i_{j+1} = i_j  +1 \mbox{~for all~} j \}$,
which is not semilinear.
\qed
\end{proof}

However, we have:
\begin{theorem}
If $L$  is in $\DCM$  ($\NCM$) and $A$ is a deterministic (nondeterministic)
transducer with a pushdown and reversal-bounded counters, or
$L$ is in $\DPCM$ ($\NPCM$) and $A$ is a deterministic (nondeterministic)
reversal-bounded counter transducer, then
$A^{-1} (L)$ is in $\DPCM$ ($\NPCM$).
\end{theorem}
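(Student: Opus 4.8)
The plan is to reuse, essentially verbatim, the buffered-simulation construction from the proof of Lemma~\ref{invTransductions}, adding the single observation that makes the theorem work: the machine built for $A^{-1}(L)$ needs only the stores of the transducer $A$ running in parallel with the stores of an acceptor for $L$, and in each of the two cases there is exactly one pushdown among these, so the combination is still a $\DPCM$ (resp.\ $\NPCM$) machine. Take the first case: let $M$ be a $\DCM$ machine (say $k$ counters, $l$-reversal-bounded) with $L(M)=L$, and let $A$ be a deterministic transducer with a pushdown and $k_A$ reversal-bounded ($l_A$-reversal-bounded) counters. I would build $M'$ that on input $w\lhd$ simulates $A$ on $w$ one transition at a time, using the pushdown and the first $k_A$ counters for $A$'s stores, and keeps the output emitted by the current transition in the same bounded buffer in the finite control that is used in Lemma~\ref{invTransductions}. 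Whenever the buffer is non-empty $M'$ suspends $A$ and simulates $M$ on the buffer contents using the remaining $k$ counters; when the buffer empties it resumes $A$; and when $A$ reaches $\lhd$ in a final state, $M'$ appends $\lhd$ to the buffer, finishes simulating $M$, and accepts iff $M$ then reaches a state of $F$ with only $\lhd$ left. The second case is symmetric: $M$ is now a $\DPCM$ machine and $A$ a plain deterministic reversal-bounded counter transducer, so $M'$ uses $A$'s counters together with $M$'s pushdown and counters — again one pushdown in all.

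Next I would verify that $M'$ has the claimed form. The buffer never holds more than the output of one transition of $A$, hence is finite, and $M'$'s finite control is the product of those of $A$ and $M$ with the buffer. The counters (and pushdown, in the second case) of $A$ are driven exactly as in a run of $A$ on $w\in\Sigma^*$, and the counters and pushdown of $M$ are driven exactly as in a run of $M$ on the string $x=A(w)$; so every store respects its original reversal bound, $M'$ is $\max\{l,l_A\}$-reversal-bounded, and it carries a single pushdown, i.e.\ it is a $\DPCM$ machine. Determinism of $M'$ follows from determinism of $A$ and $M$ plus the fact that the interleaving discipline (run $A$ until it emits output, drain the buffer through $M$, repeat) is itself deterministic. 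For the nondeterministic variants the identical construction applies, simply propagating the nondeterministic choices of $A$ and $M$, and yields an $\NPCM$ machine.

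Finally, $L(M')=A^{-1}(L)$ is checked exactly as in Lemma~\ref{invTransductions}: a computation of $M'$ on $w$ decomposes uniquely into the simulated run of $A$ on $w$ producing some output $x$, interleaved with the simulated run of $M$ on $x\lhd$, and $M'$ accepts iff $A$ accepts $w$ producing $x$ and $M$ accepts $x$, i.e.\ iff $w\in A^{-1}(L(M))$; if $M$ happens to loop forever on some buffered prefix (now possible because a pushdown is present), then $M'$ fails to accept, which is the correct outcome since that $x$ is then not in $L$. The only point requiring care is the \textbf{``at most one pushdown'' bookkeeping}: the theorem is tight precisely because combining two unbounded stores (two pushdowns, or a pushdown with an unrestricted counter) already destroys semilinearity and decidability of emptiness, as the preceding theorem shows, so the construction must not smuggle in a second unbounded store — and by design it does not. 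Everything else is the routine store-composition argument already established for Lemma~\ref{invTransductions}.
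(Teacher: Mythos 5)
Your proposal is correct and follows essentially the same route as the paper, whose entire proof is the remark that the construction of Lemma~\ref{invTransductions} carries over: simulate $A$ with its stores, buffer its bounded per-transition output in the finite control, and simulate the acceptor for $L$ on that buffer with its own stores, the key point being that in both cases the combined machine has only one pushdown plus reversal-bounded counters and hence is a $\DPCM$ (resp.\ $\NPCM$) machine. Your additional remarks on the interleaving discipline, possible non-halting of the simulated acceptor, and the single-pushdown bookkeeping are sound elaborations of that same argument.
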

\begin{proof}
Similar to the proof of Lemma \ref{invTransductions}.
\end{proof}

\section{Summary of Results}
\label{sec:summary}

This section summarizes insertion closure properties demonstrated in this paper. For one-way machines, all closure properties, both for $\DCM(k,l)$ and $\DCM$ are summarized in Table \ref{tab:summary}.
Also, for two-way machines with one reversal-bounded counter, $2\DCM(1)$, the results are summarized as follows:
\begin{itemize}
\item There exists $L \in \DCM(1,1)$ (one-way), s.t. $\suff^{-1}(L) \notin 2\DCM(1)$ (Theorem \ref{inversesuffix}).
\item There exists $L \in \DCM(1,1)$ (one-way) , $R$ regular, s.t. $RL \notin 2\DCM(1)$ (Corollary \ref{leftreg}).
\item There exists $L \in \DCM(1,1)$ (one-way), s.t. $\outf^{-1}(L) \notin 2\DCM(1)$ (Theorem \ref{inverseoutfix}).
\item There exists $L \in \DCM(1,1)$ (one-way), s.t. $\infx^{-1}(L) \notin 2\DCM(1)$ (Theorem \ref{prop:inverseNotInDCM}).
\item There exists $L \in 2\DCM(1)$, 1 input turn, 1 counter reversal, s.t. $\pref^{-1}(L) \notin 2\DCM(1)$ (Theorem \ref{nonclosure2}).
\item There exists $L \in 2\DCM(1)$, 1 input turn, 1 counter reversal, $R$ regular, s.t. $LR \notin 2\DCM(1)$ (Theorem \ref{nonclosure2}).
\end{itemize}

This resolves every open question summarized above, optimally, in terms of the number of counters, reversals on counters, and reversals on the input tape.
Also, it was shown that the right input end-marker is necessary for
$\DCM$, and that $\DCM$ is closed under inverse deterministic reversal-bounded
multicounter transducers that can define natural insertion operations.








\section*{Acknowledgements} We thank the anonymous reviewers for suggestions improving the presentation of the paper.

\bibliography{undecide_refs}
\bibliographystyle{elsarticle-num}

\end{document}